\newcommand\Ex{\mathbf E} 
\newcommand{\x}{x^L}
\newcommand{\y}{y^L}
\newtheorem{lemma}{Lemma}
\newtheorem{theorem}{Theorem}
\newtheorem{proposition}{Proposition}
\newcommand\figurechannelmodel{
\begin{tikzpicture}[scale=0.7,>=latex]
\draw[gray] (-1,-1.2) rectangle (10.7,2.2);

\begin{scope}[xshift=-3cm]
\node at (-1.2,0) {\small 010110};
\draw[->, out=0,in =-180] [->] (0,0) to node[above,swap] {
\hspace{0.8cm}
\parbox{2cm}{\parbox{1.9cm}{\small encode}}
} (2,0);
\end{scope}

\begin{scope}[xshift=-0.7cm]
\draw[->, out=0,in =-180] [->] (12,0) to node[above,swap] {
\hspace{0.8cm}
\parbox{2cm}{\parbox{1.9cm}{\small decode}}
} (13.8,0);
\node at (14.9,0) {\small 010110};
\end{scope}

\node at (-0,0.4) {\tiny \textcolor{blue}{ACATACGT}};
\node at (-0,0) {\tiny \textcolor{red}{CATGTACA}};
\node at (-0,-0.4) {\tiny \textcolor{brown}{GCTATGCC}};

\draw[->] [->] (1,0) to (1.8,0);
\node at (1.5,0) [above,rotate=60] {\hspace{1.1cm}\scriptsize synthesis};

\begin{scope}[xshift=-10cm,yshift=3cm]
\draw[thick, drop shadow, fill=white] (13,-3) circle (1cm);
\draw[very thick, blue] (13.4,-3) to (13.7,-3);
\draw[very thick, red] (13,-2.45) to (12.8,-2.2);
\draw[very thick, red] (12.2,-3.4) to (12.5,-3.3);
\draw[very thick, brown] (12.7,-3.1) to (13.0,-3);
\draw[very thick, brown] (13,-2.2) to (13.3,-2.3);
\end{scope}

\draw[->] (4.3,0) to node[above,swap,rotate=60] {\hspace{1.5cm}\scriptsize amplification
} (5.1,0);
\begin{scope}[xshift=-6.7cm,yshift=3cm]
\draw[thick, drop shadow, fill=white] (13,-3) circle (1cm);
\draw[very thick, blue] (13.1,-3) to (13.2,-3.3);
\draw[very thick, blue] (13.4,-3) to (13.7,-3);
\draw[very thick, blue] (13.4,-2.7) to (13.5,-2.45);
\draw[very thick, blue] (12.4,-2.7) to (12.5,-2.45);
\draw[very thick, blue] (12.8,-2.7) to (13.1,-2.7);
\draw[very thick, red] (12.8,-3.8) to (13.1,-3.7);
\draw[very thick, red] (12.2,-3.4) to (12.5,-3.3);
\draw[very thick, red] (12.9,-3.4) to (12.65,-3.2);
\draw[very thick, red] (13,-2.45) to (12.8,-2.2);
\draw[very thick, red] (13.5,-3.2) to (13.5,-3.5);
\draw[very thick, brown] (12.5,-3.1) to (12.5,-2.8);
\draw[very thick, brown] (12.7,-3.1) to (13.0,-3);
\draw[very thick, brown] (13.7,-3.2) to (13.9,-3);
\draw[very thick, brown] (13,-2.2) to (13.3,-2.3);
\draw[very thick, brown] (12.4,-3.5) to (12.7,-3.6);
\end{scope}

\draw[->] [->] (7.7,0) to (8.6,0);
\node at (8.3,0.2) [above,rotate=60] {\hspace{1cm}\scriptsize sequencing};
\begin{scope}[xshift=-1.1cm]
\node at (10.7,0.8) {\tiny \textcolor{blue}{ACATA{\bf t}GT}};
\node at (10.7,0.4) {\tiny \textcolor{red}{C{\bf g}TGTACA}};
\node at (10.7,0) {\tiny \textcolor{red}{CATGTACA}};
\node at (10.7,-0.4) {\tiny \textcolor{blue}{ACATACGT{\bf t} }};
\node at (10.7,-0.8) {\tiny \textcolor{blue}{CATACGT}};
\end{scope}

\begin{scope}[yshift=-3.7cm,xshift=-0.9cm,>=latex]
\node [rotate=90] at (5.8,2.1) {$=$};

\draw[gray] (-2,-1.7) rectangle (13.7,1.6);

\node at (1,0.5) {\scriptsize \textcolor{blue}{ACATACGT}};
\node at (1,0) {\scriptsize \textcolor{red}{CATGTACA}};
\node at (1,-0.5) {\scriptsize \textcolor{brown}{GCTATGCC}};

\draw[->] [->] (4,-0.3) to node[above,swap] {sample \& perturb} (8,-0.3);

\draw[decoration={brace,mirror,raise=5pt,amplitude=4pt},decorate]
  (-0.2,-0.5) -- node[below=6pt,yshift=-0.15cm] {$L$} (2.2,-0.5);

\draw [decorate,decoration={brace,amplitude=4pt},xshift=-10pt,yshift=0pt]
(0,-0.7) -- (0,0.7) node [black,midway,xshift=-0.4cm] 
{$M$};

\node at (10.7,1) {\scriptsize \textcolor{blue}{ACATA{\bf t}GT}};
\node at (10.7,0.5) {\scriptsize \textcolor{red}{C{\bf g}TGTACA}};
\node at (10.7,0) {\scriptsize \textcolor{red}{CATGTACA}};
\node at (10.7,-0.5) {\scriptsize \textcolor{blue}{ACATACGT{\bf t} }};
\node at (10.7,-1) {\scriptsize \textcolor{blue}{CATACGT}};

\draw [decorate,decoration={brace,amplitude=4pt,mirror},xshift=1pt,yshift=0pt]
(12,-1.3) -- (12,1.3) node [black,midway,xshift=0.4cm] 
{$N$};

\end{scope}

\end{tikzpicture}
}
\definecolor{DarkBlue}{rgb}{0,0,0.7}
\newcommand\ML{ML} 
\newcommand\NL{NL}
\renewcommand{\caption}[1]{\singlespacing\hangcaption{#1}\normalspacing}
\newcommand{\Z}{{\mathds Z}}
\newcommand{\ep}{\epsilon}
\newcommand{\bA}{\mathsf{A}}
\newcommand{\bC}{\mathsf{C}}
\newcommand{\bG}{\mathsf{G}}
\newcommand{\bT}{\mathsf{T}}
\newcommand{\C}{{\mathcal C}}
\newcommand{\E}{{\mathcal E}}
\newcommand{\M}{{\mathcal M}}
\newcommand{\N}{{\mathcal N}}
\newcommand{\T}{{\mathcal T}}
\newcommand{\q}[2]{Q_{s_{#1},d_{#2}}}
\newtheorem{cor}{Corollary} 
\newcommand{\blfootnote}[1]{%
  \begingroup
  \renewcommand\thefootnote{}\footnote{#1}%
  \addtocounter{footnote}{-1}%
  \endgroup
}
\newcommand{\defi}{\triangleq}
\newcommand{\one}{\mathds 1}
\newcommand{\st}{:}
\newcounter{constcount}
\newcounter{numcount}
\newcommand{\eqnum}{\stackrel{(\roman{numcount})}{=}\stepcounter{numcount}}
\newcommand{\leqnum}{\stackrel{(\roman{numcount})}{\leq\;}\stepcounter{numcount}}
\newcommand{\rescnt}{\setcounter{numcount}{1}}
\newcounter{thmcnt}
  \let\Oldsection\section
\renewcommand{\section}{\stepcounter{thmcnt}\Oldsection}
\newcommand{\aln}[1]{\begin{align*}#1\end{align*}}
\newcommand{\al}[1]{\begin{align}#1\end{align}}
\def\Item$#1${\item $\displaystyle#1$
   \hfill\refstepcounter{equation}(\theequation)}
\newcommand{\bea}{\begin{eqnarray}}
\newcommand{\eea}{\end{eqnarray}}
\newcommand{\beas}{\begin{eqnarray*}}
\newcommand{\eeas}{\end{eqnarray*}}
\newcommand\Tex{}
\newcommand\PR[2][\Tex]{
\ifthenelse{\equal{#1}{}}{{\mathrm{Pr}}\left(#2\right)}{\ensuremath{{\mathrm{Pr}}_{#1}\left[ #2\right]}}}
\newcommand\EX[2][\Tex]{
\ifthenelse{\equal{#1}{}}{{\mathbb E}\left[#2\right]}{\ensuremath{{\mathbb E}_{#1}\left[ #2\right]}}}
\newcommand\Var[2][\Tex]{
\ifthenelse{\equal{#1}{}}{{\mathrm{Var}}\left[#2\right]}{\ensuremath{{\mathrm{Var}}_{#1}\left[ #2\right]}}}
\newcommand\defeq{\coloneqq}
\renewcommand\M{M} 
\renewcommand\N{N} 
\newcommand\len{L} 
\renewcommand\q{q} 
\newcommand\vf{\mathbf{f}} 
\newcommand\cNM{\lambda} 
\newcommand{\type}{t}
\newcommand\ind[1]{\mathds{1}\left\{#1\right\}}
\newcommand\norm[2][\Tnorm]{\ensuremath{{\left\|#2\right\|}_{#1}}}
\newcommand\Emolseen{1-q_0}
\newcommand\Cbsc{C_{\text{BSC}}} 
\newcommand\Rbsc{R_{\text{BSC}}} 
\newcommand\Rind{R_{\text{index}}} 
\begin{document}

\title{}

\begin{center}

{\bf{\LARGE{
DNA-Based Storage: Models and Fundamental Limits
}}}

\vspace*{.2in}

{\large{
\begin{tabular}{cccc}
Ilan Shomorony$^{\dagger}$ &  Reinhard Heckel$^{\ddagger}$  
\end{tabular}
}}

\vspace*{.2in}

\begin{tabular}{c}
$^\dagger$University of Illinois at Urbana-Champaign \\
$^\ddagger$Technical University of Munich
\end{tabular}

\vspace*{.2in}

\today

\vspace*{.2in}

\begin{abstract}
Due to its longevity and enormous information density, DNA is an attractive medium for archival storage. 
In this work, we study the fundamental limits and trade-offs of DNA-based storage systems by introducing a new channel model, which we call the noisy shuffling-sampling channel. Motivated by current technological constraints on DNA synthesis and sequencing, this model captures three key distinctive aspects of DNA storage systems: (1) the data is written onto many short DNA molecules; (2) the molecules are corrupted by noise during synthesis and sequencing and (3) the data is read by randomly sampling from the DNA pool. We provide capacity results for this channel under specific noise and sampling assumptions and show that, in many scenarios, a simple index-based coding scheme is optimal.
\end{abstract}

\end{center}

\blfootnote{
Parts of this paper were presented at the 2017 and 2019 IEEE International Symposium on Information Theory (ISIT)\cite{heckel_fundamental_2017,shomorony_capacity_2019}.
}

\vspace{-5mm}
\section{Introduction}
\label{sec:intro}
Due to its longevity and enormous information density, and thanks to rapid advances in technologies for writing (synthesis) and reading (sequencing), DNA is on track to become an attractive medium for archival data storage. 
DNA is a long molecule made up of four nucleotides (Adenine, Cytosine, Guanine, and Thymine) and, for storage purposes, can be viewed as a string over a four-letter alphabet. 
While in a living cell a DNA molecule can consist of millions of nucleotides, due to technological constraints, it is difficult and inefficient to synthesize long strands of DNA.
Thus, in practice, data is stored on short DNA molecules which are preserved in a DNA pool and cannot be spatially ordered. 

In recent years, several groups have demonstrated
 working DNA storage systems~\cite{church_next-generation_2012,goldman_towards_2013,grass_robust_2015,yazdi_rewritable_2015,erlich_dna_2016,organick_scaling_2017}.
In these systems, information was 
 stored  on molecules of no longer than one or two hundred nucleotides.
At the time of reading, the information is accessed via state-of-the-art sequencing technologies. This corresponds to (randomly) sampling and reading sequences from the pool of DNA. Sequencing is preceded by several cycles of Polymerase Chain Reaction (PCR) amplification. In each cycle each molecule is replicated by a factor of 1.6-1.8. 
Thus, the proportions of the sequences in the DNA mixture just before sequencing and the probability that a given sequence is read depends on the synthesis method, the PCR steps, and the decay of DNA during storage. 
Finally, sequencing and in particular synthesis of the DNA may lead to insertions, deletions, and substitutions of nucleotides in individual DNA molecules. See~\cite{heckel_channel_2019} for a detailed discussion of the error sources and probabilities for different experimental setups.

Given these constraints, a mathematical model for a DNA storage channel is as follows. Data is written on $\M$ DNA molecules, each of length $\len$. From this multiset of sequences, $\N$ sequences are drawn according to some distribution $Q$, and then perturbed by introducing individual base errors. 
A critical element of this model is that by drawing $\N$ sequences according to some distribution $Q$, the order of the sequences is lost.

\begin{figure}
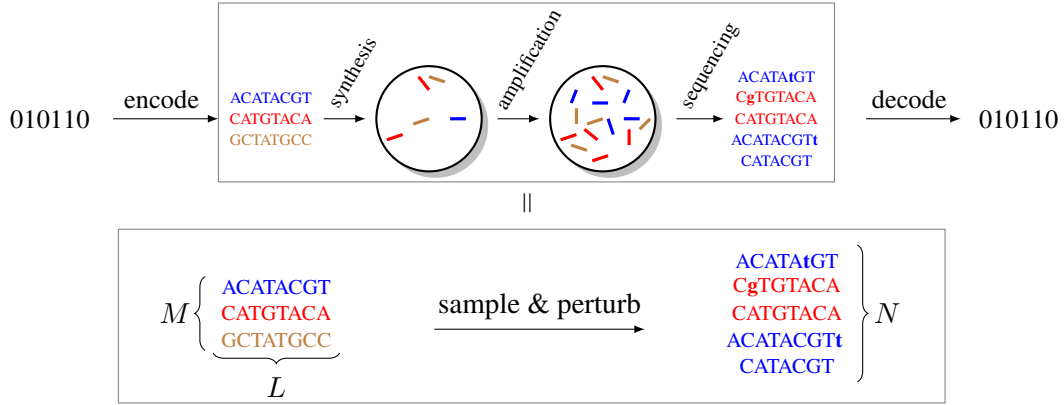

\hspace{1.45cm}
\parbox{\textwidth}{
\figurechannelmodel
}
\caption{
\label{fig:channelmodel}
Channel model for DNA storage systems. 
The input to the channel is a multi-set of $\M$ length-$\len$ DNA molecules and the output is a multi-set of $\N$ draws from the pool of DNA molecules that are perturbed by insertions, substitutions, and deletions (marked as lowercase and boldface letters).
}
\vspace{-2mm}
\end{figure}

The decoder's goal is to reconstruct the information from the multi-set of $\N$ reads. 
Note that the decoder has no information about which molecules were sampled, and in general  
a fraction of the original DNA fragments may never be sampled. 
Our goal is to study the capacity of this channel under different modeling assumptions on the sampling distribution and introduced errors.


\subsection{Contributions}

In this paper we study the fundamental limits of the DNA storage model outlined above.
Our analysis aims to reveal the basic relationships and trade-offs between key design parameters and performance goals such as storage density and reading/writing costs.
Throughout, 
we 
consider the asymptotic regime where $\M \to \infty$.
The main parameter of interest is the storage capacity $C$, defined as the maximum number of bits that can be reliably stored per nucleotide (the total number of nucleotides is $\M\len$).

\paragraph{Capacity in the case of noise-free sequences:}
We start with a channel without errors in the individual sequences. 
Thus, randomness is only introduced through the distribution $Q$, which describes the number of copies we draw from each input sequence. 
According to $Q$, some of the individual sequences might never be drawn and others are drawn many times. 
Our main result for this channel states that if $\lim_{\M \to \infty} \frac{\len}{\log \M} = \beta >1$, then
\al{
C = (1 - q_0) (1- 1/\beta),
\label{eq:cap_noiseless}
}
where $q_0$ is the probability that a given sequences is never sampled. 
Interestingly, our result only depends on the distribution $Q$ through $q_0$, which is the probability that a given sequences is never sampled. 
Moreover, if $\lim_{\M \to \infty} \frac{\len}{\log \M} < 1$, no positive rate is achievable. 
The factor $1 - q_0$ is the loss due to unseen molecules, and $1-1/\beta$ corresponds to the loss due to the unordered fashion of the reading process.

One important implication of our result is that a simple index-based scheme (as commonly used by DNA data storage systems) is optimal; i.e., prefixing each molecule with a unique index incurs no rate loss. 
More specifically, our result shows that indexing each DNA molecule and employing an erasure code across the molecules
is capacity-optimal. 
Furthermore, the capacity in \eqref{eq:cap_noiseless} is only non-trivial if the read length scales as $L = \Theta(\log M)$.
For that reason, throughout the paper we focus on the regime $L = \beta \log M$, where $\beta$ is a positive constant.

Suppose that each sequence is drawn according to a Poisson distribution with mean $\cNM$, so that in expectation $ \cNM M$ sequences are drawn and $\cNM$ can be thought of as the sequencing coverage depth. 
Then, the probability that a sequence is never drawn is $e^{- \cNM}$ and it decays exponentially in the coverage depth. 
For this scenario, our expression for the capacity suggests that practical systems should not operate at a high coverage depth $\N/\M$, as high coverage depth significantly increases the time and cost of reading, but only provides little storage gains. 
Notice that, in order to guarantee that all $M$ sequences are observed at least once, we need $\N = \Omega(\M \log \M)$ \cite{LanderWaterman,motahariDNA}.
When $\M$ is large, it is wasteful to operate in this regime, 
as this only gives a marginally larger storage capacity, but the sequencing costs can be exorbitant. 


\paragraph{Capacity in the case of noisy sequences:}
Our second contribution is an expression for the capacity for the case where the reading of the sequences is noisy. The goal of this second statement is to understand the effect of errors within sequences, in addition to the shuffling and sampling of the sequences. 
We assume that the 
distribution $Q$ with which the sequences are drawn is a simple Bernoulli distribution; i.e., 
a sequence is either drawn once with probability $1-q_0$ or not drawn with probability $q_0$.
Furthermore we focus on substitution errors within sequences.
Thus, we study a noisy shuffling-sampling model where the output sequences are obtained as follows:
(i) each original sequence is drawn with probability $1-q$ and not drawn with probability $q$,
(ii) the drawn sequences are shuffled,
and
(iii) passed through a binary symmetric channel with crossover probability $p$.

In the low-error regime (where $p$ is sufficiently small), the capacity of this noisy shuffling-sampling channel is given by
\al{
C = (1-q) (1 - H(p) - 1/\beta).
\label{eq:bsc_cap}
}
Note that $1-H(p)$ is the capacity of the binary symmetric channel. 
As it turns out, \eqref{eq:bsc_cap} can be achieved by treating each length-$\len$ sequence as the input to a separate BSC and encoding a unique index into each sequence, and using an erasure outer code to protect against the loss of a $q_0$-fraction of the $M$ sequences.
For a large set of parameters $\beta$ and $p$ (described in Section~\ref{sec:bsc}), is capacity-optimal. 
This result provides a theoretical justification for a number of works, starting with~\cite{grass_robust_2015}, which have used a similar coding scheme in real implementations of DNA-based storage systems~\cite{grass_robust_2015,yazdi_rewritable_2015,erlich_dna_2016,organick_scaling_2017,Meiser_Antkowiak_Koch_Chen_Kohll_Stark_Heckel_Grass_2020}.


\subsection{Related literature}
Computer scientists and engineers have dreamed of harnessing DNA's storage capabilities already in the 60s~\cite{neiman_fundamental_1964,baum_building_1995}, and in recent years this idea has developed into an active field of research.
In 2012 and 2013 groups lead by Church~\cite{church_next-generation_2012} and Goldman~\cite{goldman_towards_2013} independently stored about a megabyte of data in DNA. 
In 2015, Grass et al.~\cite{grass_robust_2015} demonstrated that millenia long storage times are possible by protecting the data both physically and information-theoretically, and designed a robust DNA data storage scheme using modern error correcting codes. Later, in the same year, Yazdi et al~\cite{yazdi_rewritable_2015} showed how to selectively access parts of the stored data, and in 2017, Erlich and Zielinski~\cite{erlich_dna_2016} demonstrated that practical DNA storage can achieve very high information densities. In 2018, Organick et al.~\cite{organick_scaling_2017} scaled up these techniques and stored about 200 megabytes of data.

The capacity of a DNA storage system under a related model has been studied in an unpublished manuscript by MacKay, Sayer, and Goldman~\cite{mackay_near-capacity_2015,sayir_challenge_2016}.
In the model in~\cite{mackay_near-capacity_2015}, the input to the channel consists of a (potentially arbitrarily large) set of DNA molecules of fixed length $\len$, which is not allowed to contain duplicates. The output of the channel are $\M$ molecules drawn with replacement from that set. 
The approach in~\cite{mackay_near-capacity_2015} considers coding over repeated independent storage experiments, and computes the single-letter mutual information over one storage experiment. This indicates that the price of not knowing the ordering of the molecules is logarithmic in the number of synthesized molecules, similar to our main result.

The capacity of a DNA storage system under a different model was studied in \cite{erlich_dna_2016}. 
Specifically \cite{erlich_dna_2016} assumes that each DNA segment is indexed which reduces the channel model to an erasure channel. 
While this assumption removes the key aspects that we focus on in this paper, namely that DNA molecules are stored in an unordered way and read via random sampling, \cite{erlich_dna_2016} considers other important constraints, such as homopolymer limitations.

Several recent works have designed coding schemes for DNA storage systems based on this general model, some of which were implemented in proof-of-concept storage systems \cite{church_next-generation_2012,goldman_towards_2013,grass_robust_2015,bornholt_dna-based_2016,erlich_dna_2016}. 
Several papers have studied important additional aspects of the design of a practical DNA storage system.
Some of these aspects include DNA synthesis constraints such as sequence composition  \cite{kiah_codes_2016,yazdi_rewritable_2015,erlich_dna_2016}, the asymmetric nature of the DNA sequencing error channel \cite{gabrys_asymmetric_2015}, 
the need for codes that correct insertion errors \cite{sala_insertions_2016}, 
and the need for techniques to allow random access \cite{yazdi_rewritable_2015}.
The use of fountain codes for DNA storage was considered in both \cite{erlich_dna_2016} and \cite{mackay_near-capacity_2015}. 

Finally, the recent paper~\cite{lenz_upper_2019} studies a related channel and proves a converse on the capacity through a combination of techniques, including ideas from~\cite{heckel_fundamental_2017,shomorony_capacity_2019}.

\section{Problem setting and channel models}
\label{sec:problem}

An $(\M,\len)$ DNA storage code $\C$ is a set of codewords,  
each of which is a list 
$[\x_1,\ldots,\x_M]$
 of $M$ strings of length $\len$,
 together with a decoding procedure. 
The alphabet $\Sigma$ is typically $\{\bA,\bC,\bG,\bT\}$, corresponding to the four nucleotides that compose DNA. 
However, to simplify the exposition we focus on the binary case $\Sigma = \{0,1\}$, and we note that the results can be extended to a general alphabet in a straightforward manner. 
Throughout the paper we use the word molecule or sequence to refer to each of the stored strings of length $\len$ over the alphabet $\Sigma$. 
We study the following general noisy shuffling-sampling channel model:
\begin{enumerate}
\item Given that codeword $[\x_1,\ldots,\x_M] \in \C$ is chosen, each sequence $\x_i$ is sampled a number $\N_i \sim Q$ of times, for some distribution $Q = (q_0,q_1,\ldots)$, where $\q_n = \PR{N_i = n}$ is the probability that $\x_i$ is drawn $n$ many times. 
We let $N = \sum_{i=1}^\M N_i$ be the total number of resulting strings, and we define $\cNM \defeq \EX{\N}/\M = \EX{\N_i}$.
The distribution $Q$ models imperfections in synthesis, sequencing, and a loss of whole sequences during storage (see \cite{heckel_channel_2019} for a detailed discussion on how this distribution looks like for specific choices of sequencing and synthesis technologies). 
\item Each of the resulting $\N$ strings is passed through a discrete memoryless channel.
\item The resulting $\N$ strings are shuffled uniformly at random to yield the output $[\y_1,\ldots,\y_N]$. 
Equivalently, the output of the channel is the (unordered) multi-set of $N$ output sequences $\{\y_1,\ldots,\y_N\}$. 
\end{enumerate}

A decoding function then maps the received sequences $[\y_1,\ldots,\y_N]$ to a message index in $\{1,\ldots,|\C|\}$.
The main parameter of interest of a DNA storage system is the storage density, or the storage rate, 
defined as the number of bits written per DNA base synthesized, i.e., 
\al{
R \defeq \frac{\log|\C|}{\M\len}. \label{eq:Rs}
}
We consider an asymptotic regime where 
$M \to \infty$ and we let $\len \defeq \beta \log M$ for some fixed $\beta$.
As our main results show, $L = \Omega(\log M)$ is the asymptotic regime of interest for this problem.
We say that the rate $R$ is achievable if there exists a sequence of DNA storage codes $\C_\M$ with rate $R$ such that the decoding error probability tends to $0$ as $\M \to \infty$. 


\section{Storage Capacity for the Noise-free Channel}
\label{sec:noisefree}

An important property of DNA storage channels is the fact that the order or the molecules are lost. 
We first focus on this aspect of the channel model by studying the noise-free channel (where all copies are noise-free, i.e., the discrete memoryless channel is just the ``identity channel'').

The main result of this section is the characterization of the storage capacity, given by the following theorem.

\begin{theorem} \label{thm:noisefree}
The storage capacity of the noise-free shuffling-sampling channel is
\al{
C  & = (1-q_0) \left(1 - 1/\beta \right). \label{eq:storagecap}
}
In particular, if $\beta \leq 1$, no positive rate is achievable.
\end{theorem}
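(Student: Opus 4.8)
The plan is to prove Theorem~\ref{thm:noisefree} by establishing matching achievability and converse bounds; throughout $\log$ denotes $\log_2$ and we work in the regime $\len=\beta\log M$.

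\emph{Achievability (for $\beta>1$).} I would analyze the index-based scheme. Reserve the first $\lceil\log M\rceil$ coordinates of each of the $M$ molecules for a distinct index in $\{1,\dots,M\}$, leaving $\len-\lceil\log M\rceil=(\beta-1)\log M-O(1)$ data coordinates per molecule. Across the $M$ data payloads place an MDS erasure code (e.g.\ a Reed--Solomon code, grouping coordinates into large field symbols) of dimension $(1-q_0-\epsilon)M$, which corrects any $(q_0+\epsilon)M$ erased payloads. The decoder groups the received strings by their index prefix (uncorrupted, since the channel is noise-free), discards duplicates, declares missing indices erased, and runs the outer decoder. Since $B_i\defeq\ind{N_i\ge1}$ are i.i.d.\ $\Bern(1-q_0)$, a Chernoff bound shows the number of never-sampled molecules exceeds $(q_0+\epsilon)M$ with probability tending to $0$, so the error probability vanishes. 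The rate is $\frac{(1-q_0-\epsilon)M((\beta-1)\log M-O(1))}{M\beta\log M}\to(1-q_0-\epsilon)(1-1/\beta)$, and since $\epsilon>0$ is arbitrary every $R<(1-q_0)(1-1/\beta)$ is achievable.

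\emph{Converse.} Take $W$ uniform on $\C$ and let $\Nv=(N_1,\dots,N_M)$ be the sampling-count vector and $Y$ the output multiset. Because $\Nv$ is independent of $W$, Fano's inequality combined with $I(W;Y)\le I(W;Y,\Nv)=I(W;Y\mid\Nv)\le H(Y\mid\Nv)$ gives $\log|\C|\,(1-P_e)\le H(Y\mid\Nv)+1$, so it suffices to bound $H(Y\mid\Nv)$. Fix $\Nv=(n_i)$ and set $n\defeq\sum_i n_i$ and $k\defeq|\{i:n_i\ge1\}|$; the key observation is that $Y$ is then a size-$n$ \emph{multiset} over $\Sigma^\len$ whose support has size at most $k$, and there are at most $(k+1)\binom{2^\len}{k}2^{n-1}$ such multisets (choose the support, then the multiplicities). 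As this count does not depend on $W$, $H(Y\mid\Nv=(n_i))\le\log\binom{2^\len}{k}+n+\log(k+1)$. Now $\binom{2^\len}{k}\le(2^\len)^k/k!$ gives $\log\binom{2^\len}{k}\le k\len-\log k!\le k\len-k\log k+O(k)$ by Stirling; the $-\log k!$ term is the price of $Y$ being unordered. Averaging over $\Nv$, using $\EX{k}=(1-q_0)M$ (which does not depend on the code), $\EX{n}=\cNM M=O(M)$, and convexity of $x\mapsto x\log x$ (so $\EX{k\log k}\ge\EX{k}\log\EX{k}$), one gets $H(Y\mid\Nv)\le(1-q_0)M\len-(1-q_0)M\log M+O(M)$. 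Dividing $\log|\C|\,(1-P_e)\le H(Y\mid\Nv)+1$ by $M\len=\beta M\log M$ and using $\len=\beta\log M$ yields $R\,(1-P_e)\le(1-q_0)-\tfrac{1-q_0}{\beta}+O(1/\log M)$, and letting $M\to\infty$ (so $P_e\to0$) gives $R\le(1-q_0)(1-1/\beta)$.

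\emph{The case $\beta\le1$ and the main obstacle.} When $\beta\le1$ the bound on $\binom{2^\len}{k}$ is vacuous ($k$ may exceed $2^\len$), so I argue directly: the number of size-$n$ multisets over $\Sigma^\len$ is at most $2^{n+2^\len}$, hence $H(Y\mid\Nv)\le\EX{n}+2^\len=\cNM M+M^\beta=O(M)$ (using $M^\beta\le M$), so $R\le O(1/\log M)\to0$ and no positive rate is achievable. The delicate point of the whole argument is the tightness of the converse: a naive bound $H(Y\mid\Nv=(n_i))\le k\len$ only yields $R\le1-q_0$, so one must both (i) restrict to multisets whose support has size $\le k$ rather than all size-$n$ multisets --- otherwise the leading term carries $\cNM M$ instead of $(1-q_0)M$ --- and (ii) retain the term $-\log k!\approx-(1-q_0)M\log M$, which, because $\len=\beta\log M$, is exactly a $\tfrac1\beta$-fraction of the main term $(1-q_0)M\len$, accounting for the $\tfrac1\beta$ loss. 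The remaining steps (keeping the support-choice and multiplicity contributions at $O(M)$, for which $\cNM=\EX{N_i}<\infty$ suffices, and the $\beta\le1$ edge case above) are routine.
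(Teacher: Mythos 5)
Your achievability argument is the same index-plus-outer-erasure-code scheme as the paper's, and your converse is correct but follows a genuinely different route. The paper introduces a genie that tags each molecule with a unique random index before sampling, argues that the resulting de-duplicated set (equivalently, a frequency vector over $\Sigma^\len$) is a sufficient statistic, proves a separate concentration lemma for its $\ell_1$ norm (Lemma~\ref{lem:conc}), and then counts frequency vectors via $\binom{a+b-1}{b}$ (Lemma~\ref{lem:comb}). You instead condition on the sampling-count vector $\vec{N}$ (legitimate, since it is independent of the message, so $I(W;Y)\le H(Y\mid \vec N)$), observe that given $\vec N$ the output is a multiset whose support has size at most $k=|\{i:N_i\ge 1\}|$, and count such multisets directly; the $1/\beta$ loss then comes from $\binom{2^\len}{k}\le (2^\len)^k/k!$, i.e.\ from the $-\log k!$ term, exactly as the paper's comes from $\binom{a+b-1}{b}$ versus $a^b$. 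Your route buys two simplifications: no genie/sufficient-statistic argument is needed (you pay $O(\mathbb{E}[N])=O(M)$ bits for the multiplicities instead, which is why you correctly need $\lambda=\mathbb{E}[N_i]<\infty$, an assumption the paper also makes implicitly), and Jensen's inequality on $k\log k$ replaces the Hoeffding concentration lemma. The paper's genie-based formulation is more robust in that it never references $\mathbb{E}[N]$ (the de-duplicated set has size at most $M$ regardless of how heavy-tailed $Q$ is) and it is reused verbatim for Proposition~\ref{thm:shortseqs}.

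One minor imprecision in the achievability: for $1<\beta<2$ the per-molecule payload alphabet has size about $M^{\beta-1}<M$, so a length-$M$ Reed--Solomon/MDS code over that alphabet does not exist; you should either interleave several shorter MDS codes or simply invoke any capacity-achieving code for the block-erasure channel (e.g.\ a random linear code), which is all the argument requires.
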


The capacity expression in (\ref{eq:storagecap}) can be intuitively understood through the achievability argument.
A storage rate of $R = (1-q_0) \left(1 - 1/\beta \right)$ can be easily achieved by prefixing all the molecules with a distinct tag, which effectively converts the channel to a block-erasure channel. 
More precisely, we use the first $\log \M$ bits of each molecule to encode a distinct index. 
Then we have $\len - \log \M = \len(1 - 1/\beta)$ symbols left per molecule to encode data. 
The decoder can use the indices to remove duplicates and sort the molecules that are sampled. 
This effectively creates an erasure channel, where molecule $i$ is erased if it is not drawn (i.e., $\N_i=0$) which occurs with probability $q_0$. 
Since the expected number of erasures is $\EX{\frac{1}{\M} \sum_{i=1}^\M \ind{\N_i = 0}} = q_0$, we achieve storage rate $\frac{(1-q_0)M(L-\log M)}{ML}=(1-q_0)(1-1/\beta)$.
The surprising aspect of Theorem~\ref{thm:noisefree} is that this simple index-based scheme is optimal.
It is also worth noting that the capacity expression only depends on the sampling distribution $Q$ through the parameter $q_0$, i.e., the fraction of sequences that is \emph{not seen} at the output of the channel.

In order to gain intuition on a practical implication of this theorem, suppose that each sequence is drawn according to a Poisson distribution with mean $\cNM$, so that in expectation $ \cNM M$ sequences in total are drawn and $\cNM$ can be thought of as the sequencing coverage depth. 
Then, the probability that a sequence is never drawn is $e^{- \cNM}$ and the capacity expression becomes
\al{
C = (1-e^{-\lambda})(1-1/\beta).
}
This suggests that practical systems should not operate at a high coverage depth $\N/\M$, as high coverage depth significantly increases the time and cost of reading, but only provides little storage gains, according to our capacity expression. 
Notice that, in order to guarantee that all $M$ sequences are observed at least once, we need $\N = \Omega(\M \log \M)$ \cite{LanderWaterman,motahariDNA}.
When $\M$ is large, it is wasteful to operate in this regime, 
as this only gives a marginally larger storage capacity, but the sequencing costs can be exorbitant. 

The result in Theorem~\ref{thm:noisefree} is flexible to allow different sampling models.
In particular, one can consider separating the PCR amplification performed on each synthesized molecule from the sequencing step.
Since one cannot control the PCR amplification factor precisely, it is reasonable to assume that a molecule $\x$ is first randomly amplified and a total of $A \geq 0$ copies is stored.
If we consider a Poisson sampling model for the sequencing step, the effective coverage depth is $\lambda/E[A]$ (since we are actually sampling from $M E[A]$ molecules).
In this case, the probability that none of the copies of $\x$ is sampled at the output is $\Ex[(e^{-\lambda / \Ex[A]})^A] = \Ex[(e^{(-\lambda / \Ex[A])A}]$.
This can be recognized as the moment-generating function of $A$ evaluated at $-\lambda/ E[A]$.
In particular, when PCR is also modeled as a Poisson random variable with mean $\Ex[A]=\alpha$, 
$\Ex[e^{\theta A}] = e^{\alpha (e^\theta -1)}$, and
the capacity of the resulting noise-free shuffling-sampling channel is
\al{
C = \left(1-e^{-\alpha(1-e^{-\lambda/\alpha})}\right)(1-1/\beta).
}

\subsection{
\label{sec:motivationconverse}
Motivation for Converse}

A simple outer bound can be obtained by considering a genie that provides the decoder with the ``true'' index of 
each sampled molecule.
In other words, $[\x_1,\ldots,\x_\M]$ are the stored molecules, and the decoder observes $[\y_1,\ldots,\y_N]$ and the mapping $\sigma \colon \{1,\ldots, N\} \to \{1,\ldots, M \}$ so that
$\y_j = \x_{\sigma(j)}$.
This converts the channel into an erasure channel with block-erasure probability $\q_0$, which yields
\al{
R \leq 1 - \q_0. \label{eq:simplebound1}
}
It is intuitive that the bound~\eqref{eq:simplebound1} should not be achievable, as the decoder in general cannot sort the molecules and create an effective erasure channel.
However, it is not clear a priori either whether prefixing every molecule with an index is 
optimal.

Notice that one can view the noise-free DNA storage channel as a channel where
the encoder chooses a distribution (or a type) over the alphabet $\Sigma^\len$
and the decoder observes a noisy version of this type where the frequencies are perturbed accoding to $Q$.
From this angle, the question becomes ``how many types $\type \in \Z_+^{2^L}$ with $\|\type \|_1 = M$ can be reliably decoded?'', and restricting ourselves to index-based schemes restricts the set of types to those with $\|\type \|_\infty = 1$; i.e., no duplicate molecules are stored.

While this restriction may seem suboptimal, a counting argument suggests that it is not.
The number of types for a sequence of length $M$ over an alphabet of size $|\Sigma^\len| = 2^L$ is at most $M^{2^L}$ and thus at most
\aln{
\frac{1}{ML} \log M^{2^L} = \frac{2^L \log M}{M \beta \log M}  = \frac{M^\beta}{\beta M} 
}
bits can be encoded per symbol.
We conclude that, if $\beta < 1$, the capacity is $C = 0$.
An actual bound on the rate can be obtained by counting the number of types more carefully.
This is done in the following lemma, which we prove in the appendix.
\begin{lemma} \label{lem:comb}
The number of distinct vectors $\type \in \Z_+^{a}$ with $\|\type \|_1 = b$ is given by
\aln{
\T[a,b] \defeq {a+b-1 \choose b} < \left( \frac{e(a+b-1)}{b} \right)^b.
}
\end{lemma}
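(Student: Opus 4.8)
The plan is to prove the identity and the inequality separately. For the identity $\T[a,b] \defeq \binom{a+b-1}{b}$, I would use the classical ``stars and bars'' bijection: a vector $\type = (\type_1,\dots,\type_a) \in \Z_+^a$ with $\sum_{i=1}^a \type_i = b$ corresponds bijectively to an arrangement in a line of $b$ indistinguishable stars and $a-1$ indistinguishable bars, where $\type_i$ is the number of stars lying in the $i$-th block determined by consecutive bars (with the convention that the blocks before the first bar and after the last bar are also counted). Equivalently, such a vector is determined by choosing which $b$ of the $a+b-1$ total positions hold a star. Since this correspondence is one-to-one and onto, the number of such vectors is exactly $\binom{a+b-1}{b}$.

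For the upper bound, write $n = a+b-1$ and $k = b$, and assume $k \ge 1$ (so that the right-hand side is defined; the case $b=0$ being the trivial single all-zero vector). I would bound
\[
\binom{n}{k} \;=\; \frac{n(n-1)\cdots(n-k+1)}{k!} \;\le\; \frac{n^k}{k!},
\]
and then invoke $k! > (k/e)^k$, which follows from $e^k = \sum_{j \ge 0} k^j/j! > k^k/k!$ since the exponential series contains further strictly positive terms. Combining the two estimates gives $\binom{n}{k} < (en/k)^k = \left( e(a+b-1)/b \right)^b$, which is the claimed bound, with strict inequality coming from the factorial estimate.

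I do not expect any genuine obstacle here; both ingredients are textbook. The only points I would state carefully are that $\Z_+$ denotes the \emph{nonnegative} integers (so the count is stars-and-bars, not compositions into positive parts), and that the strictness of the final inequality is inherited from the strict lower bound $k! > (k/e)^k$. Since the lemma is applied only in the regime $b = M \to \infty$, these edge cases are immaterial for the application in the converse.
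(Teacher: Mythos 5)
Your proof is correct and follows essentially the same route as the paper: your stars-and-bars bijection is exactly the paper's encoding of a vector as a binary string with $b$ ones and $a-1$ zeros, and the paper simply cites the upper bound as a standard binomial estimate, which you derive explicitly via $\binom{n}{k}\le n^k/k!$ and $k!>(k/e)^k$. No gaps; your derivation of the inequality is a welcome addition of detail rather than a departure.
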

Since our types are vectors $\type \in \Z_+^{2^L}$ with $\|\type \|_1 = M$, and $2^L  = 2^{\beta \log M} = M^\beta$, it follows that at most 
\aln{
\frac{1}{ML} \log \left( \frac{e(M^\beta+M-1)}{M} \right)^{M} \leq \frac{M \log (\alpha M^{\beta-1})}{M \beta \log M}
}
bits can be encoded per symbol, for some $\alpha > 1$, and 
\al{
R \leq 1- 1/\beta.  \label{eq:simplebound2}
}
Therefore, if we had a deterministic channel where the decoder  observed \emph{exactly} the $M$ stored molecules, an index-based approach would be optimal from a rate standpoint.
The converse presented in the next section utilizes a more careful genie to show
that the bounds in  (\ref{eq:simplebound1}) and (\ref{eq:simplebound2}) can in fact be combined, implying the optimality of index-based coding approaches.
\subsection{Converse}
\label{sec:converseCounting}

\newcommand\set{\mathrm{set}}

\newcommand\Nsampled{{\tilde N}}

Let $[\x_1,\ldots,\x_M]$ be the $\M$ length-$\len$ molecules written into the channel 
and $[\y_1,\ldots,\y_N]$ be the length-$\len$ molecules observed by the decoder. 
Notice that, whenever the channel output is such that $\y_i = \y_j$ for $i \ne j$, the decoder cannot determine whether both $\y_i$ and $\y_j$ were sampled from the same molecule $\x_\ell$ or from two different molecules that obey $\x_\ell = \x_k, \ell \neq k$. 
In order to derive the converse, we consider a genie-aided channel that removes this ambiguity. 
\begin{figure}[h] 
\vspace{0mm}
	\center
       \includegraphics[width=9cm]{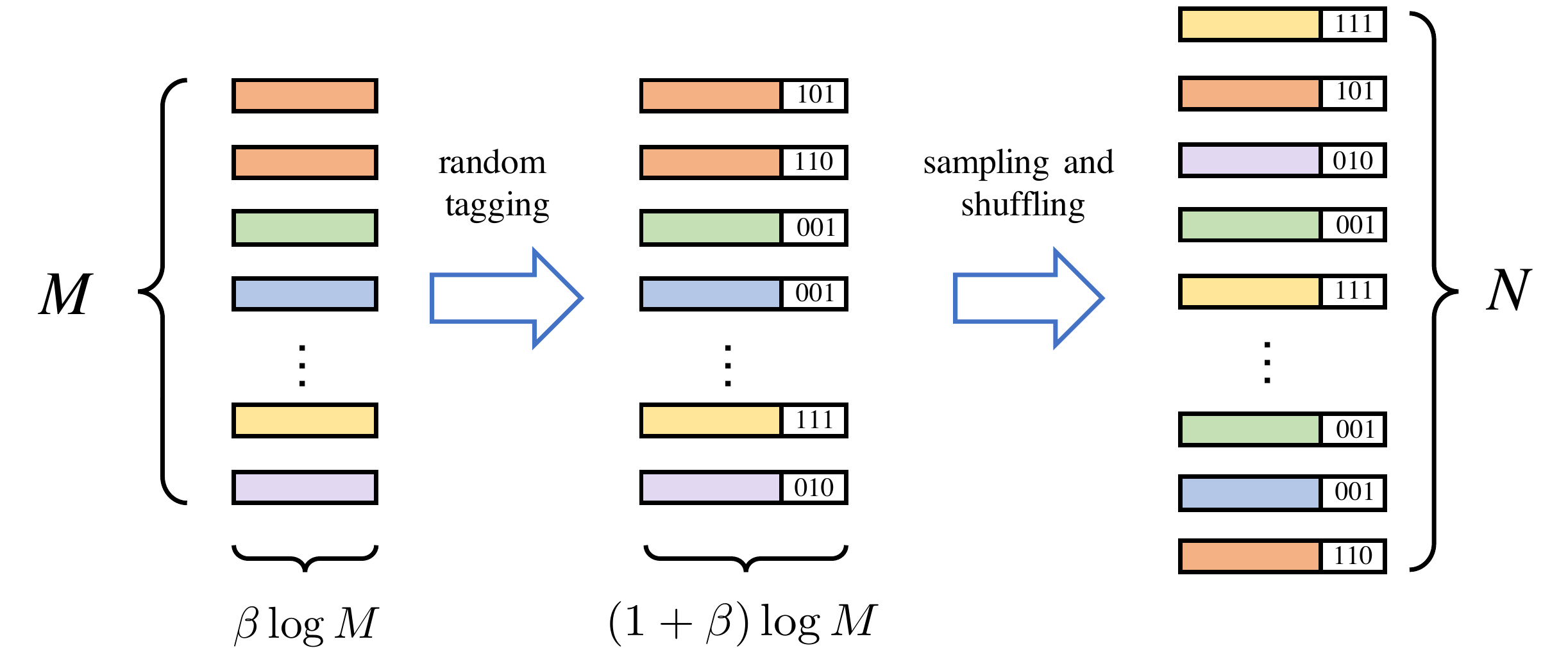} 
       \caption{Genie-aided channel for converse.
       \label{fig:geniechannel}}
\end{figure}
As illustrated in Figure~\ref{fig:geniechannel}, before sampling the $\N$ molecules, the genie-aided channel appends a unique index of length $\log \M$ to each molecule $\x_i$, which results in the set of tagged molecules $\{(\x_i,z_i)\}_{i=1}^\M$. 
We emphasize that the indices $z_i$ are all unique, and are chosen randomly and independently of the input sequences $\{\x_i\}_{i=1}^\M$. 
Notice that, in contrast to the naive genie discussed in Section~\ref{sec:motivationconverse}, this genie does \emph{not} reveal 
the index $i$ of the molecule $\x_i$ from which $\y_\ell$ was sampled.
Therefore, the channel is not reduced to an erasure channel, 
and intuitively the indices are only useful for the decoder to determine whether two equal samples $\y_\ell = \y_k$ came from the same molecule or from distinct molecules. 

The output of the genie-aided channel, denoted by $\{(\y_i,z_{\sigma(i)})\}_{i = 1}^\N$, is then obtained 
by sampling from the set of tagged molecules $\{(\x_i,z_i)\}_{i=1}^\M$, in the same way as the original channel samples the original molecules.
The mapping $\sigma : [1:N] \to [1:M]$ is such that $\y_i$ was sampled from $\x_{\sigma(i)}$.
Notice that the actual mapping $\sigma$ is not revealed to the decoder.

It is clear that any storage rate  achievable in the original channel can be achieved on the genie-aided channel, as the decoder can simply discard the indices, or stated differently, the output of the original channel can be obtained from the output of the genie-aided channel. 

Notice that $\{(\y_i,z_{\sigma(i)})\}_{i = 1}^\N$ is in general a multi-set.
We let $\set( \{(\y_i,z_{\sigma(i)})\}_{i = 1}^\N )$ be the set obtained from $\{(\y_i,z_{\sigma(i)})\}_{i = 1}^\N$ by removing any duplicates. 
Then $\set( \{(y_i,z_{\sigma(i)})\}_{i = 1}^\N )$ is a sufficient statistic for $\{\x_i\}_{i=1}^\M$ since all tagged molecules are distinct objects, and sampling the same tagged molecule $(\x_i,z_i)$ does not yield additional information on $\{\x_i\}_{i=1}^\M$. 
More formally, conditioned on $\set( \{(\y_i,z_{\sigma(i)})\}_{i = 1}^\N )$, $\{\x_i\}_{i=1}^\M$ is independent of the genie's channel output $\{(\y_i,z_{\sigma(i)})\}_{i = 1}^\N$. 

Next, we define the frequency vector $\vf \in \Z_+^{\M^{\beta}}$ 
(note that $|\Sigma^{\len}| = 2^{\beta \log \M} = \M^\beta$) 
obtained from $\set( \{(y_i,z_{\tilde i})\}_{i = 1}^\N )$ in the following way. 
The entry of $\vf$ corresponding to $\y$, for $\y \in \Sigma^{\len}$, is given by 
\aln{
\vf[\y] 
\defeq \left| \left\{ (\y_j,z_{\sigma(j)}) \in \set( \{(\y_i,z_{\sigma(i)})\}_{i = 1}^\N )
\colon  \y_j = \y \right\} \right|. 
}
Hence, $\vf$ is essentially a histogram that counts the number of occurrences of $\y \in \Sigma^L$ in the set of tagged molecules $\{(\y_i,z_{\sigma(i)})\}_{i = 1}^\N$.
Notice that the entries of $\vf$ can take values greater than one.

Since $\set( \{(\y_i,z_{\sigma(i)})\}_{i = 1}^\N )$ is a sufficient statistic for $\{\x_i\}_{i=1}^\M$
and the tags added by the genie were chosen at random and independently of $\{\x_i\}_{i=1}^\M$, it follows that $\vf$ is also a sufficient statistic for $\{\x_i\}_{i=1}^\M$. 
Hence, we can view the (random) frequency vector $\vf$ as the output of the channel without any loss. 
Notice that $|\set( \{(\y_i,z_{\sigma(i)})\}_{i = 1}^\N )| =  \| \vf \|_1$, and in expectation we have 
$\EX{ \| \vf \|_1 / \M } = \frac{1}{\M} (1 - \q_0)$. 
Furthermore, the following lemma asserts that $\norm[1]{\vf}$ does not exceed its expectation by much.

\begin{lemma} \label{lem:conc}
For any $\delta > 0$, the frequency vector $\vf$ at the output of the genie-aided channel satisfies
\aln{
\PR{
 \frac{\| \vf \|_1}{\M} >
1 - \q_0
 + \delta 
 } 
 \to 0, \text{ as } \M \to \infty. 
}
\end{lemma}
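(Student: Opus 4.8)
The plan is to show that $\|\vf\|_1$, which equals the number of \emph{distinct} tagged molecules that appear at least once among the $N$ draws, concentrates around its mean $(1-q_0)M$. First I would rewrite $\|\vf\|_1$ as a sum of indicators: $\|\vf\|_1 = \sum_{i=1}^M \ind{N_i \ge 1}$, where $N_i \sim Q$ is the number of times molecule $\x_i$ is drawn. Crucially, the genie's tags make the $M$ tagged molecules distinct objects, so whether $(\x_i,z_i)$ is sampled at least once depends only on $N_i$; and since $N_i$ depends only on the sampling step (step 1 of the channel), the events $\{N_i \ge 1\}$ are independent across $i$ (the $N_i$ are i.i.d.\ copies of $Q$). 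Hence $\|\vf\|_1$ is a sum of $M$ i.i.d.\ Bernoulli$(1-q_0)$ random variables, with mean exactly $(1-q_0)M$.

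With independence in hand, concentration is immediate from a Chernoff/Hoeffding bound: for any $\delta > 0$,
\al{
\PR{ \frac{\|\vf\|_1}{M} > 1 - q_0 + \delta } \le e^{-2\delta^2 M} \to 0 \quad \text{as } M \to \infty. \nonumber
}
(Equivalently one may invoke the multiplicative Chernoff bound, or simply Chebyshev's inequality, since the variance is $O(M)$ while the deviation is $\Theta(M)$; any of these suffices for the stated limit.)

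The one point requiring care — and the only real obstacle — is justifying that $\|\vf\|_1$ is really just $\sum_i \ind{N_i \ge 1}$ and not something larger due to the DMC or the shuffling. Here the identity channel is not assumed, so two \emph{distinct} input molecules $\x_i \ne \x_j$, or even two copies of the same molecule, could in principle produce outputs that collide in the $\y$-coordinate; but because the genie appends a \emph{unique} index $z_i$ to each of the $M$ input molecules \emph{before} sampling, and the DMC acts only on the length-$\len$ payload (not on the appended tag — or, if one prefers, the tags are protected), the tagged outputs coming from molecule $\x_i$ all carry the tag $z_i$, so the set of distinct tagged molecules observed has exactly one representative per sampled input molecule. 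Thus $|\set(\{(\y_i,z_{\sigma(i)})\}_{i=1}^N)| = \#\{i : N_i \ge 1\}$, which is what $\|\vf\|_1$ counts. Once this bookkeeping is settled, the rest is the standard concentration argument above, and I would present it in one or two lines.
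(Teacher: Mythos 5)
Your proof is correct and matches the paper's argument essentially verbatim: the paper likewise writes $\norm[1]{\vf} = \sum_{i=1}^{\M} \ind{\N_i > 0}$, observes that these indicators are independent with mean $1-q_0$, and applies Hoeffding's inequality to obtain the bound $e^{-2\M\delta^2}$. Your additional bookkeeping about why the unique tags ensure one representative per sampled input molecule is a sound (if unstated in the paper) justification of that first identity.
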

\begin{proof}
Note that the number of distinct fragments that have been drawn is 
\[
\frac{\norm[1]{\vf}}{\M} = \frac{1}{\M} \sum_{i=1}^\M \ind{ \N_i > 0 }.
\]
Since $\ind{ \N_i > 0 }$ are independent random variables with expectation $1-\q_0$, 
Hoeffding's inequality yields
\[
\PR{ \frac{\norm[1]{\vf}}{\M} \geq (1-\q_0)
+ \delta} 
\leq e^{-2\M \delta^2},
\]
which concludes the proof\footnote{
An analogue of Lemma~\ref{lem:conc} can be proved for a different sampling model, which we describe in Appendix~\ref{app:lemma2}.
}.
\end{proof}


We now append the coordinate
$
f_0 = ( 1-q_0 + \delta) \M - \| \vf \|_1
$
to the beginning of $\vf$ to construct $\vf' = (f_0,\vf)$.
Notice that when $\| \vf \|_1 \leq (1-q_0 + \delta)\M$ (which by Lemma~\ref{lem:conc} happens with high probability), we have $\| \vf' \|_1 = ( 1-q_0 + \delta)\M$.

Fix $\delta > 0 $, and define the event $\E =  \{ \| \vf \|_1 > (1-q_0 + \delta)\M \}$ with indicator function $\one_\E$.
By Lemma~\ref{lem:conc}, $\PR{\E} \to 0$ as $\M \to \infty$. 
Consider a sequence of codes $\{\C_\M\}$ with rate $R$ and vanishing error probability.
If we let $W$ be the message to be encoded, chosen uniformly at random from $\{1,\ldots,2^{\M \len R}\}$,
from Fano's inequality we have
\al{
\M \len R_s  
&= 
H(W)
=
I(W; \vf') + H(W| \vf') \nonumber \\
&\leq H(\vf') + 1 + P_e \M \len R_s,
}
where $P_e$ is the probability of a decoding error, which by assumption goes to zero as $M \to \infty$. 
We can then
upper bound the achievable storage rate $R$ as
\al{
\M & \len R_s (1-P_e) 
\leq H (\vf' )  +1 \leq  H\left( \vf', \one_\E \right) +1 \nonumber \\
&\leq \PR{ \E } H\left( \left. \vf'\, \right|  \E  \right) + \PR{ \bar \E } H \left( \left. \vf' \, \right|  \bar \E  \right) + H(\one_\E) + 1 \nonumber \\
&\leq \PR{ \E } \log \T [\M^\beta+1,\M] \nonumber \\ 
&\quad \quad + \log \T [\M^\beta+1, (\Emolseen + \delta)\M] + 2, \label{eq:rsbound}
}
where $\T[a,b]$ is the number of vectors $x \in \Z_+^a$ with $\| x \|_1 = b$.
An application of Lemma~\ref{lem:comb} yields 
\aln{
\log \T & [\M^\beta+1, (\Emolseen + \delta)\M] \\
& \leq (\Emolseen + \delta)\M \log \left( e + \frac{e M^{\beta-1}}{(1-e^{-\cNM} + \delta)} \right) \\ 
& \leq (\Emolseen + \delta)\M \log \left( \alpha M^{\beta-1} \right) \\
& \leq (\Emolseen + \delta)\M  [ (\beta-1) \log \M + \log \alpha ],
}
where $\alpha$ is a positive constant. 
Analogously, we obtain 
\aln{
\log \T [\M^\beta+1,\M]  \leq \M ( (\beta -1) \log \M + \log \alpha ). 
}
Dividing (\ref{eq:rsbound}) by $\M\len$ and applying the bounds above yields
\aln{
R (1 - P_e) &\leq \Pr( \E ) \frac{ \M [(\beta -1) \log \M + \log \alpha ]}{\M \len} \\
& + \frac{( \Emolseen + \delta)\M  [ (\beta-1) \log \M + \log \alpha ]}{\M \len} + \frac{2}{\M\len} \\
& \leq \Pr( \E ) \left( \frac{\beta - 1}{\beta} +\frac{\log \alpha}{\beta \log M}  \right) \\ 
&  + (\Emolseen + \delta) \left(1-\frac{1}{\beta}  + \frac{\log \alpha}{\beta \log \M}\right)  + \frac{2}{\M\len}.
}
Finally, letting $\M \to \infty$ yields
\aln{
R \leq ( \Emolseen + \delta ) \left(1- 1/\beta \right),
}
since $\Pr(\E) \to 0$ by Lemma~\ref{lem:conc}.
Since $\delta > 0$ can be chosen arbitrarily small, this concludes the converse proof of Theorem~\ref{thm:noisefree}.




\section{The noisy shuffling-sampling channel}
\label{sec:bsc}





Next, we study the effect of errors within sequences, in addition to the shuffling and sampling of the sequences.
Instead of the general sampling distribution $Q$ considered in Section~\ref{sec:noisefree}, we now focus on a simple choice of sampling distribution and let $Q$ be distributed as $\text{Bernoulli}(1-q)$. 
Hence, a sequences is either drawn never or once, with the corresponding probabilities given by $\Pr(N_i= 0) = q$ and $\Pr(N_i=1)=1-q$, for $i = 1,\ldots,M$.
Moreover, we assume that the molecules are all corrupted by a BSC with error probability $p$.
We refer to this channel as the noisy shuffling-sampling channel. 

\subsection{The capacity of the noisy shuffling-sampling channel}


As in the error-free shuffling-sampling channel considered in Section~\ref{sec:noisefree}, we again consider a simple index-based coding scheme. 
As we will show, for a large set of parameters $p$ and $\beta$, this scheme turns out to be capacity-optimal.

We consider a erasure-correcting code with block length $\M$ and rate $(1-q)$, where each symbol is itself a binary string of length
$\len(1-H(p) - 1/\beta-\epsilon)$, for some small epsilon.
This code will be used as an outer code.
Our inner code will be a code designed for a BSC with codewords of length $\len$ and rate $R_{\text{BSC}} = 1-H(p) - \ep$.
We first encode the information using the outer code, which yields $\M$  symbols, which are binary strings of length 
\aln{
\len(1-H(p) - 1/\beta-\epsilon)
= L R_{\text{BSC}} - \log M.
}
We take each symbol, add a unique binary index of length $\log \M$ and encode the resulting sequence using the BSC code, which yields $M$ length-$L$ sequences.

With this scheme, we encode a total of $(1-q)M(\len \Rbsc - \log M)$ data bits, with a data rate of 
\al{
\frac{(1-q)\M \left( \len \Rbsc - \log \M\right)}{\M \len} = (1-q)(\Rbsc - 1/\beta). \label{eq:schemerate}
}
Since $\ep > 0$ can be chosen arbitrarily small,
this scheme achieves a rate arbitrarily close to 
\al{
\Rind = (1-q)(1 - H(p) - 1/\beta). \label{eq:lower}
}

Strictly speaking, the simple index-based scheme described above needs to be slightly modified to account for the fact that, if an inner codeword is decoded in error (which occurs with a small probability) its unique index will also be decoded in error, likely causing an ``index collision'' with another correctly decoded inner codeword.
Such a collision effectively creates two erasures.
Moreover, there exists an even smaller probability that two inner codewords are decoded in error in a way that the true indices are swapped. 
Such an event may not be detected at the decoder side based on the set of decoded indices.
Notice, however, that since the error probability of the inner code goes to zero as $M \to \infty$, these events are much rarer than the erasures caused by the sampling distribution $Q$.
It is straightforward to show that by considering an outer code with rate $1-q - \epsilon_2$, for an arbitrarily small $\epsilon_2$, these additional small-probability events can be accounted for.
Hence, (\ref{eq:lower}) is a lower bound to the capacity $C$ of the noisy shuffling-sampling channel.


On the other hand, the result from Section~\ref{sec:noisefree}, with $Q \sim {\rm Ber}(1-q)$ implies that $C \leq (1-q)(1-1/\beta)$, since the error-free shuffling-sampling channel cannot be worse than the noisy shuffling-sampling channel.
Furthermore, a simple genie-aided argument where the decoder observes the shuffling map can be used to establish that $C \leq (1-q)\Cbsc$, where $\Cbsc = 1-H(p)$ is the capacity of a BSC with crossover probability $p$.
Hence, a capacity upper bound is given by
\al{
C \leq (1-q) \min\left[ 1- H(p), 1-1/\beta \right]. \label{eq:upper}
}
Our main result improves on the upper bound in (\ref{eq:upper}), and establishes that for parameters $(p,\beta)$ in a certain regime, the lower bound in equation~\eqref{eq:lower} is the capacity.

%
%

\begin{theorem}
\label{thm:bsc}
The capacity of the noisy shuffling-sampling channel is
%
%
\al{
C = (1-q)(1 - H(p) - 1/\beta), \label{eq:capacity}
}
as long as $p<1/4$ and $1-H(2p)- 2/\beta > 0$.
Moreover, if $\beta \leq 1$, the capacity is $C = 0$.
\end{theorem}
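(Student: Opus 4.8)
The plan is to pair the achievability bound already established in \eqref{eq:lower}, $C\geq(1-q)(1-H(p)-1/\beta)$, with a matching converse. The claim $C=0$ for $\beta\leq1$ is immediate: the noise-free shuffling-sampling channel (Theorem~\ref{thm:noisefree} with $Q\sim\mathrm{Ber}(1-q)$, so $q_0=q$) cannot be worse than the noisy one, and Theorem~\ref{thm:noisefree} already states that no positive rate is achievable when $\beta\leq1$. So the entire task is the converse bound $C\leq(1-q)(1-H(p)-1/\beta)$ under the hypotheses $p<1/4$ and $1-H(2p)-2/\beta>0$.

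The first step is to hand the sampling pattern $S=\{i:N_i=1\}$ to the decoder as a genie. Since $S$ is independent of the message, this is without loss for the converse, and it reduces the situation to a pure shuffling channel: the codeword determines a tuple $X$ of $\tilde N\defeq|S|$ molecules, each of which is passed through an independent $\mathrm{BSC}(p)$, and the $\tilde N$ resulting strings are collected into an unordered multiset $\mathcal Y$. Because we then have the Markov chain $W\to X\to\mathcal Y$ given $S$, Fano's inequality reduces the proof to showing $I(X;\mathcal Y\mid S)\leq(1-q)ML\,(1-H(p)-1/\beta)(1+o(1))$. I would expand $I(X;\mathcal Y\mid S)=H(\mathcal Y\mid S)-H(\mathcal Y\mid X,S)$. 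The first term is handled purely by counting: conditioned on $S$, $\mathcal Y$ is a multiset of $|S|$ elements over the alphabet $\Sigma^L$ with $|\Sigma^L|=M^\beta$, so Lemma~\ref{lem:comb}, together with concavity of $b\mapsto\log\T[M^\beta,b]$ and $\EX{|S|}=(1-q)M$, gives $H(\mathcal Y\mid S)\leq\EX{\log\T[M^\beta,|S|]}\leq\log\T[M^\beta,(1-q)M]\leq(1-q)ML(1-1/\beta)(1+o(1))$. This is exactly the source of the $1/\beta$ term, as in the noise-free converse of Section~\ref{sec:converseCounting}.

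The core of the argument is the matching lower bound $H(\mathcal Y\mid X,S)\geq(1-q)ML\,H(p)(1-o(1))$: shuffling the noisy reads should destroy essentially none of the BSC noise entropy. Writing $\vec Y$ for the \emph{ordered} list of reads (so that $\mathcal Y$ is a deterministic function of $\vec Y$), we have $H(\mathcal Y\mid X,S)=H(\vec Y\mid X,S)-H(\vec Y\mid\mathcal Y,X,S)=\tilde N L\,H(p)-H(\vec Y\mid\mathcal Y,X,S)$, since the reads are conditionally independent given the molecules. Thus it suffices to show $H(\vec Y\mid\mathcal Y,X,S)=o(ML)$, i.e.\ that the order is essentially recoverable from the unordered reads once the true molecules $X$ are known. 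This holds whenever the sampled molecules are pairwise at Hamming distance more than $2(p+\epsilon)L$: then the balls $B(x_i,(p+\epsilon)L)$ are disjoint, and since every noise vector has weight at most $(p+\epsilon)L$ with probability tending to $1$, matching each read to its nearest molecule in $X$ reconstructs $\vec Y$ exactly. Combining the two bounds yields $I(X;\mathcal Y\mid S)\leq(1-q)ML(1-1/\beta-H(p))(1+o(1))$, hence $R\leq(1-q)(1-H(p)-1/\beta)$ after dividing by $ML$ and letting $M\to\infty$; note $1-1/\beta-H(p)>0$ automatically in the stated regime, since $H(p)\leq H(2p)<1-2/\beta$ for $p<1/4$.

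The main obstacle is establishing this separation property for an \emph{arbitrary} rate-$R$ code with vanishing error, since a priori a codebook could place many molecules within distance $2pL$ of one another, making $H(\vec Y\mid\mathcal Y,X,S)$ large and breaking the bound. This is precisely where $p<1/4$ and $1-H(2p)-2/\beta>0$ enter: for $p<1/4$ a Hamming ball of radius $2pL$ has size about $2^{LH(2p)}=M^{\beta H(2p)}$, so two ``generic'' length-$L$ molecules lie within distance $2pL$ with probability about $M^{-\beta(1-H(2p))}$, and a union bound over the at most $\binom{M}{2}<M^2$ pairs gives a vanishing probability of a close pair exactly because $2-\beta(1-H(2p))<0$. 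Making this rigorous for every code is the delicate part: one route is to refine the genie so that it additionally reveals the read-to-molecule assignment only for reads whose molecule has a close partner (a cheap description when close pairs are rare), splitting the sampled molecules into an ``isolated'' part, bounded as above, and a ``clustered'' part, bounded instead by the noise-free counting argument together with the ball-packing constraint limiting how many distinct clustered configurations exist. The rest — the routine concentration estimates for $|S|$ and for the noise weights, and absorbing the low-probability bad events into the error term — is standard.
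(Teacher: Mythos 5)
Your overall architecture is sound and your identification of where the hypotheses $p<1/4$ and $1-H(2p)-2/\beta>0$ must enter is correct, but the converse as written has a genuine gap at exactly the point you flag as "the delicate part," and that point is the entire technical content of the theorem. Your decomposition $I(X;\mathcal{Y}\mid S)=H(\mathcal{Y}\mid S)-H(\mathcal{Y}\mid X,S)$ with $H(\mathcal{Y}\mid X,S)=\tilde N \len H(p)-H(\vec Y\mid \mathcal{Y},X,S)$ is a legitimate reshuffling of the paper's identity (the paper writes $I(X^{\ML};Y^{\NL})=H(Y^{\NL})-H(S^N,Z^{\NL},Y^{\NL}\mid X^{\ML})+H(S^N\mid X^{\ML},Y^{\NL})$, and your $H(\vec Y\mid\mathcal{Y},X,S)$ plays the role of $H(S^N\mid X^{\ML},Y^{\NL})$). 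But you then bound the two problematic terms \emph{separately}: $H(\mathcal{Y}\mid S)$ by unconditional multiset counting, and $H(\vec Y\mid\mathcal{Y},X,S)=o(\M\len)$ via a separation property of the codebook. The second bound is simply false for an arbitrary code with vanishing error: nothing prevents a capacity-approaching code from placing a constant fraction of its molecules within distance $2p\len$ of one another (e.g.\ by concentrating the per-molecule payload on a sub-block), in which case $H(\vec Y\mid\mathcal{Y},X,S)$ can be of order $\M\log\M=\M\len/\beta$, which is not $o(\M\len)$. Your union bound over $\binom{\M}{2}$ pairs applies only to molecules drawn at random, not to an adversarially designed codebook, so the fact that it reproduces the condition $2-\beta(1-H(2p))<0$ is suggestive but proves nothing. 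The sentence sketching a refined genie for the "clustered part" is a plan, not an argument.

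What the paper does instead — and what your proof is missing — is a \emph{joint} bound on the two terms (its Lemma~\ref{lem1}). It defines $T$ as a largest subset of output reads that are pairwise at Hamming distance $\geq\alpha\len$ for some $\alpha>2p$, and proves two bounds in terms of $\Ex_n|T|$: the output entropy is at most $\len\,\Ex_n|T|+(n-\Ex_n|T|)(\log\Ex_n|T|+\len H(\alpha))+o(\M\len)$ (reads outside $T$ live in Hamming balls around reads in $T$), while the assignment entropy is at most $n\log\M-\Ex_n|T|\log\Ex_n|T|+o(\M\len)$ (reads in $T$ can be matched to molecules via a Voronoi-type partition). The sum is then maximized over $\Ex_n|T|\in[1,n]$, and the condition $\beta(1-H(\alpha))>2$ — whence $1-H(2p)-2/\beta>0$ by continuity — is exactly what makes the resulting function increasing so that the maximum sits at $\Ex_n|T|=n$, i.e.\ at the well-separated configuration of Figure~\ref{fig:points}(a). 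This trade-off is what defeats clustered codebooks: when $|T|$ is small the assignment entropy is large but the output entropy shrinks by more. Without this (or an equivalent worked-out argument for your "clustered part"), your converse only covers codes that happen to be pairwise well-separated, and the theorem is not proved. Your achievability citation and the $\beta\leq 1$ reduction to Theorem~\ref{thm:noisefree} are fine.
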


The set of parameters $(p,\beta)$ such that $1-H(2p)- 2/\beta > 0$ and $p<1/4$ is the blue region in Figure~\ref{fig:capacity}.
In particular, (\ref{eq:capacity}) holds if $p \leq 0.1$ and $\beta \geq 6.4$, or if $p < 0.01$ and $\beta \geq 2.35$.

%
%
%
%

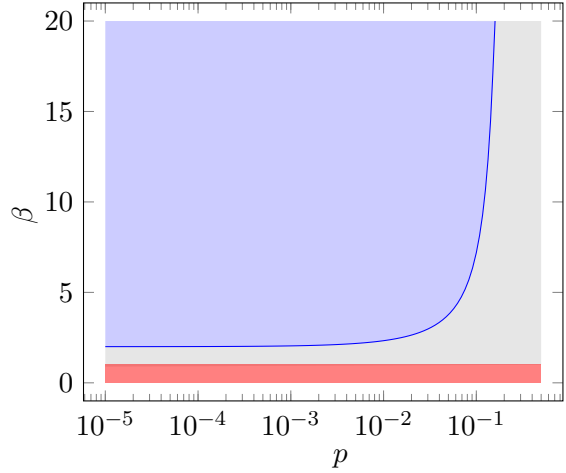
\begin{figure}[h]
\centering 
\begin{tikzpicture}[scale=0.93]

    \begin{semilogxaxis}[enlargelimits=0.05,xlabel=$p$,ylabel=$\beta$]
\addplot[name path=f,color=blue,mark=none,draw=blue!] table[x index=0,y index=1]{./curve.dat};   
   
   \path[name path=axis] (axis cs:0.00001,20) -- (axis cs:0.1,20);
   \addplot[blue!20] fill between[of=f and axis];

   \path[name path=axis2] (axis cs:0.00001,0.9) -- (axis cs:0.5,1) -- (axis cs:0.5,20) -- (axis cs:0.1,20);
   \addplot[gray!20] fill between[of=f and axis2];    
   
   \addplot[name path=f, fill=red, fill opacity=0.5, draw=none, mark=none]
coordinates {
    (0.00001, 0)
    (0.00001, 1)
    (0.5, 1)
    (0.5, 0)
};

\end{semilogxaxis}
\end{tikzpicture}  
\vspace{2mm}
\caption{\label{fig:capacity}
Parameter regions for which the capacity is characterized.
The capacity in the blue region is given by $C = (1-q)(1 - H(p) - 1/\beta)$, and the capacity in the red region (i.e., for $\beta < 1$) is $0$.
In the gray region, it is still unknown.}
\vspace{-2mm}
\end{figure} 

\subsection{Converse}
\label{sec:converse}


To derive the converse, we view the input to the channel as a binary string of length $\M \len$, denoted by
\aln{
X^{\ML} = \left[ X_1^\len, X_2^\len, \ldots , X_\M^\len \right] 
\in \{0,1\}^{\M\len}
}
or, equivalently, $\M$ strings of length $\len$ concatenated to form a single string of length $\M \len$.
Similarly, the output of the channel is
\aln{
Y^{\NL} = \left[ Y_1^\len, Y_2^\len, \ldots, Y_\N^\len \right] 
\in \{0,1\}^{\N\len},
}
where $N = \sum_i N_i$.
It is useful to define a vector $S^N \in \{1,\ldots,\M\}^\N$ indicating the input string from which each output string was sampled. 
Furthermore, we let $Z^{\NL}= \left[ Z_1^\len, \ldots, Z_\N^\len \right]$ be the random binary error pattern created by the BSC on the $N$ non-deleted strings.
We can now define the input-output relationship
\al{
Y_k^\len &= X^{\len}_{S(k)} \oplus Z^{\len}_{k}, 
\quad
\text{for $k=1,\ldots,N$},
\label{eq:inputoutput}
}
where $\oplus$ indicates elementwise modulo $2$ addition. 
Note that the $N_i$'s are fully determined by the vector $S^\N$ since $N_i = | \{ i \colon S(k) = i\} |$.
Also note that, since $Q \sim {\rm Ber}(1-q)$, $N \leq M$ with probability $1$.

Consider a sequence of codes for the noisy shuffling-sampling channel with rate $R$ and vanishing error probability.
Let 
$
X^{\ML} = \left[ X_1^\len, X_2^\len, \ldots, X_\M^\len \right]
$
be the input to the channel when we choose one of the $2^{\ML R}$ codewords from one such code uniformly at random, and 
$
Y^{\NL} = \left[ Y_1^\len, Y_2^\len, \ldots, Y_\M^\len \right]
$
be the corresponding output.
First note that
\begin{align*}
MLR 
&= H\left(X^{\ML}\right) =I\left(X^{\ML};Y^{\NL}\right) + \ML \epsilon_{\M},
\end{align*}
where $\epsilon_{\M} \to 0$ as $\M \to \infty$ by Fano's inequality.
Then,
\al{
ML(R-\epsilon_{\M}) & = H\left(Y^{\NL}\right) -  H\left(Y^{\NL}| X^{\ML} \right) \nonumber \\
& = H\left(Y^{\NL}\right) -  H\left(S^N,Z^{\NL},Y^{\NL} | X^{\ML}\right)
+ H\left(S^N,Z^{\NL}|X^{\ML},Y^{\NL}\right) \nonumber \\ 
& = H\left(Y^{\NL}\right) -  H\left(S^N,Z^{\NL},Y^{\NL} | X^{\ML}\right)
 + H\left(S^N|X^{\ML},Y^{\NL}\right) \label{eq:bound1}
}
The last equality follows by noticing that, given $(S^N,X^{\ML},Y^{\NL})$, 
one can compute
$
Z^{\len}_{k} = Y_k^\len \oplus X^{\len}_{S(k)} 
$
for $1 \leq k \leq N$, 
and thus  $H\left(Z^{\NL}|X^{\ML},Y^{\NL},S^N\right) = 0$.
Since $N$ is a function of $S^N$, and $S^N$ and $Z^{NL}$ are independent of $X^{ML}$, the second term in (\ref{eq:bound1}) can be expanded as
\al{
H\left(S^N,Z^{\NL},Y^{\NL} | X^{\ML}\right)
&= H\left(S^N, N \right) + H\left(Z^{\NL}| S^N, N \right) + H\left(Y^{\NL} | X^{\ML}, S^N,Z^{\NL}\right) \nonumber\\
&\eqnum H(N) + H\left(S^N |N  \right) + H\left(Z^{\NL}| N \right) 
\nonumber\\
\quad &\eqnum H(N) + \sum_{n=1}^M \Pr(N = n) \left[ \log \frac{M!}{(M-n)!} + nL H(p) \right] \nonumber \\
\quad &\eqnum  \sum_{n=1}^M \Pr(N = n)  \left(n \log M  + n L H(p) \right) + o(ML) \nonumber \\
\quad & = \Ex[N] \M \left(  \log \M + \len H(p) \right) + o(\ML) \nonumber \\
\quad & = (1-q)\left[ \M \log \M + \ML H(p) \right] + o(\ML). \label{eq:bound2}
}
For $(i)$ we used that $H\left(Y^{\NL} | X^{\ML}, S^N,Z^{\NL}\right) = 0$ since $Y^{\NL}$ is determined by $X^{\ML}, S^N,Z^{\NL}$, 
and $(ii)$ follows from the fact that, given $N=n$, $S^N$ is chosen uniformly at random from all vectors in $\{1,\ldots,M\}^n$ with distinct elements.
For $(iii)$, we used the fact that, from Stirling's approximation,
\aln{
\log \frac{M!}{(M-n)!} & = M\log M - (M-n) \log (M-n) +  o(ML) \\
& = M\log M - (M-n) \log M + (M-n) \log \frac{M}{M-n} +  o(ML) \\
& = n \log M + (M-n) \log \frac{M}{M-n} +  o(ML),
}
and, by Jensen's inequality,
\aln{
0 & \leq \sum_{n>0} \Pr(N = n) (M-n) \log \frac{M}{M-n}  \\ 
& \leq (M- \Ex[N]) \log \frac{M}{(M- \Ex[N])} = (1-q) M \log 1/q = o(ML).
}
%
%
%
%
In order to finish the converse, we need to jointly bound the first and third terms in equation~(\ref{eq:bound1}).
This step is summarized in the following lemma:
\begin{lemma} \label{lem1}
If $\beta$ and $p < 1/4$ satisfy
\al{
1-H(2p)- 2/\beta > 0, \label{eq:condition}
}
then it holds that 
\aln{
H\left(Y^{\NL}\right) + H\left(S^N|X^{\ML},Y^{\NL} \right) \leq (1-q)\ML + o(\ML).
}
\end{lemma}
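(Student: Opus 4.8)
\textbf{Proof proposal for Lemma~\ref{lem1}.}

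The plan is to bound the two terms separately but with a shared budget. For the first term, $H(Y^{\NL})$, I would use the fact that $Y^{\NL}$ is a length-$\NL \le \ML$ binary string, so the naive bound is $H(Y^{\NL}) \le \Ex[N]\len = (1-q)\ML + o(\ML)$ after accounting for $H(N)$. That alone, however, does not leave room to absorb $H(S^N \mid X^{\ML}, Y^{\NL})$, which can be as large as $\Theta(\M\log\M) = \Theta(\ML/\beta)$. So the real work is to show that we can \emph{save} roughly $(1-q)\ML/\beta$ in the entropy of $Y^{\NL}$ — intuitively, because the collisions among the output molecules (which is exactly the information the genie's tags would resolve) cost about $\log\M$ bits per molecule to specify, and conversely, once we know which output molecules are ``copies'' of the same input molecule, the string $S^N$ is nearly determined. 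I would make this precise by introducing the \emph{collision pattern} of $Y^{\NL}$: partition $[1:N]$ according to which output strings are equal, or better, condition on the set of distinct output strings together with multiplicities. The key point is that $H(S^N \mid X^{\ML}, Y^{\NL})$ is small once the $N$ samples land on distinct input strings with high probability, i.e., once there are no ``type-$\ge 2$'' molecules among the stored $\M$; this is where the condition $p < 1/4$ and $1 - H(2p) - 2/\beta > 0$ enters.

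Here is the mechanism I expect to be at the heart of it. Given $X^{\ML}$ and $Y^{\NL}$, to pin down $S^N$ the decoder must, for each output block $Y_k^\len$, identify which stored block $X_i^\len$ it was sampled from. Two stored blocks $X_i^\len, X_j^\len$ are ``confusable'' for this purpose when they are close enough in Hamming distance that a BSC($p$) output could plausibly have come from either — the relevant threshold is Hamming distance $\lesssim 2p\len$, since a typical noisy copy sits at distance $\approx p\len$ from its source. The number of stored blocks within distance $2p\len$ of a typical string is about $2^{\len H(2p)} = \M^{\beta H(2p)}$ out of $2^\len = \M^\beta$, but the stored blocks are an arbitrary (adversarial) codeword, so I would instead argue via a volume/counting bound: the ambiguity in $S^N$ is controlled by $\sum_k \log(\text{number of stored blocks that could have produced } Y_k^\len)$, and this is $O(N \len H(2p)) + o(\ML)$ with high probability over the BSC noise, by a standard typicality argument (a noisy copy is within distance $(p+\epsilon)\len$ of its source w.h.p., and we bound the count of stored blocks in that ball crudely by the ball volume $2^{\len H(p+\epsilon)}$; for the ``swap'' events needing distance $2p$ we get $2^{\len H(2p)}$). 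Combined with the savings argument, we'd get
\aln{
H(Y^{\NL}) + H(S^N \mid X^{\ML}, Y^{\NL}) \le (1-q)\ML - (1-q)\M\log\M + (1-q)\M\log\M \cdot c + o(\ML)
}
where the condition $1 - H(2p) - 2/\beta > 0$ is precisely what forces the error/ambiguity term $c$ to be strictly dominated, so the net is $(1-q)\ML + o(\ML)$. I would formalize the ``savings'' half via a chain-rule decomposition $H(Y^{\NL}) = H(Y^{\NL}, \text{collision structure}) = H(\text{distinct strings}) + H(\text{collision structure} \mid \text{distinct strings})$ and bound the collision structure entropy by $o(\M\log\M)$ when multi-sampled identical stored molecules are rare.

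\textbf{Main obstacle.} The hard part will be making the two bounds interact correctly — we cannot bound $H(Y^{\NL})$ and $H(S^N \mid X^{\ML}, Y^{\NL})$ in isolation, because each is individually too large, and the cancellation is between the $\log\M$-per-molecule cost of describing \emph{which copies collided} (inside $H(Y^{\NL})$, or rather the slack below $(1-q)\ML$) and the $\log\M$-per-molecule cost of resolving $S^N$ from $Y^{\NL}$. The clean way, which I'd aim for, is to bound $H(Y^{\NL}, S^N \mid X^{\ML})$ directly: write it as $H(S^N) + H(Y^{\NL}\mid X^{\ML}, S^N)$; the second term is exactly $\Ex[N]\len H(p) + o(\ML) = (1-q)\ML H(p) + o(\ML)$ (conditioned on $S^N$, the output is $N$ independent BSC outputs), and $H(S^N) = (1-q)\M\log\M + o(\ML)$ from~\eqref{eq:bound2}. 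This gives $H(Y^{\NL}) + H(S^N\mid X^{\ML},Y^{\NL}) = H(Y^{\NL}, S^N \mid X^{\ML}) + H(S^N\mid X^{\ML},Y^{\NL}) - H(S^N \mid X^{\ML}, Y^{\NL}) + [\text{correction for } H(Y^{\NL}) \text{ vs } H(Y^{\NL}\mid X^{\ML})]$ — so actually the cleanest identity is $H(Y^{\NL}) + H(S^N \mid X^{\ML}, Y^{\NL}) = H(S^N, Y^{\NL}) - I(X^{\ML}; S^N, Y^{\NL}) + H(S^N \mid X^{\ML}, Y^{\NL}) + H(S^N\mid X^{\ML})$... this is getting circular, so in practice I would go back to~\eqref{eq:bound1} and observe that what we actually need is an \emph{upper} bound on $H(Y^{\NL}) + H(S^N\mid X^{\ML},Y^{\NL})$ and the natural handle is $H(Y^{\NL}) + H(S^N\mid X^{\ML},Y^{\NL}) \le H(Y^{\NL}, S^N \mid X^{\ML}) + I(X^{\ML}; Y^{\NL})$ — no. The genuinely correct move, which I'm confident is what the authors do, is: upper-bound $H(Y^{\NL})$ by $H(Y^{\NL} \mid S^N) + H(S^N) \le H(Y^{\NL}\mid S^N) + (1-q)\M\log M + o(\ML)$ is also too lossy. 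So the real crux — and the step I'd expect to spend the most effort on — is finding the right auxiliary random variable (likely the set of distinct tagged output molecules, as in the noise-free converse of Section~\ref{sec:converseCounting}, now combined with a typicality event on the BSC noise) so that $H(Y^{\NL})$ splits into a ``content'' part of size $\le (1-q)\ML(1 - 1/\beta)$ and a ``collision'' part that pairs off against $H(S^N\mid X^{\ML},Y^{\NL})$, with~\eqref{eq:condition} being exactly the inequality that makes the residual BSC-ambiguity term negligible.
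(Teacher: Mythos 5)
You have correctly identified the central difficulty — that $H(Y^{\NL})$ and $H(S^N\mid X^{\ML},Y^{\NL})$ are each individually too large and must be traded off against one another, with the condition $1-H(2p)-2/\beta>0$ controlling a residual ambiguity term at Hamming scale $2pL$ — but the proposal does not close the argument, and the specific mechanism you lean on would fail. Your "collision pattern" is defined via exact equality of output strings, and separately you assert that $H(S^N\mid X^{\ML},Y^{\NL})$ is small "once the $N$ samples land on distinct input strings." Neither works: with $L=\beta\log M$ and independent BSC noise the output strings are all distinct with high probability, so conditioning on the exact-equality collision structure yields no savings in $H(Y^{\NL})$; and the stored codeword is arbitrary, so the input blocks may be pairwise distinct yet pairwise close in Hamming distance, in which case $S^N$ remains essentially unrecoverable even given $(X^{\ML},Y^{\NL})$. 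The cancellation has to be organized around \emph{approximate} closeness at scale $\alpha L$ for some $\alpha>2p$, not exact collisions, and your final paragraph concedes (after several identities you yourself reject as circular or too lossy) that you have not found the auxiliary variable that makes this work.

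The missing construction in the paper's proof is the following. Conditioned on $N=n$, let $T$ be a largest subset of $[1:n]$ whose output strings are pairwise at Hamming distance at least $\alpha L$, with $2p<\alpha$ chosen so that $\beta(1-H(\alpha))>2$ (possible by continuity under \eqref{eq:condition}). One then proves two bounds in terms of $\Ex_n|T|$: first, $H(Y^{\NL}\mid N=n)\le L\,\Ex_n|T|+(n-\Ex_n|T|)(\log\Ex_n|T|+LH(\alpha))+o(\ML)$, because every output string outside $T$ lies within $\alpha L$ of some representative in $T$ and so costs only $\log|T|+LH(\alpha)$ bits to describe; second, $H(S^N\mid X^{\ML},Y^{\NL},N=n)\le n\log M-\Ex_n|T|\log\Ex_n|T|+o(ML)$, because (on the high-probability event that the BSC moves each string by less than $\alpha L/2$) a triangle inequality forces $S(i)$ for $i\in T$ into a cell $A_i$ of a partition of $[1:M]$, and $\sum_{i\in t}\log|A_i|\le |t|\log(M/|t|)$. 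The first bound is increasing in $\Ex_n|T|$ and the second decreasing, and the combined expression $f(x)=\beta(1-H(\alpha))x\log M+n\log x-2x\log x$ is shown to be increasing on $[1,M]$ precisely when $\beta(1-H(\alpha))>2$, so it is maximized at $x=n$, yielding $nL+n\log(M/n)+o(ML)$ and, after averaging over $n$ with Jensen, $(1-q)\ML+o(\ML)$. Without the set $T$, the two complementary bounds, and the monotonicity argument that pins the extremal case at $|T|=n$, the trade-off you describe remains heuristic.
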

The parameter regime $(p,\beta)$ for which~\eqref{eq:condition} holds is the regime in which our capacity expression holds, illustrated in Figure~\ref{fig:capacity}. 
Combining (\ref{eq:bound1}), (\ref{eq:bound2}) and Lemma \ref{lem1}, we have
\aln{
ML(R-\epsilon_{\M}) 
&  \leq (1-q)\left( \ML  - \ML H(p) - M \log \M \right) +  o(\ML).
}
Dividing by $\ML$ and letting $\M \to \infty$ yields the converse.

\subsection{Intuition for Lemma~\ref{lem1}}

In order to discuss the intuition for Lemma~\ref{lem1} let us focus on the case $q=0$; i.e., none of the molecules are lost at the output.
In this case, $N=M$, and $S^M$ is chosen uniformly at random from all permutations of $[1,...,M]$.
If we naively bound each entropy term separately, we obtain 
\aln{
H\left(Y^{ML}\right) + H\left(S^N|X^{\ML},Y^{ML}\right) \leq \ML + \M \log \M. 
}
However, intuitively, the bound $H\left(S^M|X^{\ML},Y^{ML}\right) \leq \M \log \M$ is too loose because, as we argue below, 
if the entropy term $H\left(Y^{ML}\right)$ is large then we expect $H\left(S^M|X^{\ML},Y^{\NL}\right)$ to be small and vice versa.

To see this, first note that from $X^{\ML} = x^{\ML}$ and $Y^{ML} = y^{ML}$, one can estimate the permutation $S$ that maps each output string to the corresponding input string, $S^M$, by finding, for each $y_i^\len$, the $x_j^\len$ that is closest to it and setting $S(i) = j$.
This is a good estimate if no other $x_k^\len$ is close to $x_j^\len$. 
There are two regimes, illustrated in Figure~\ref{fig:points}, one where $S^N$ can be estimated well and one where it cannot.
\begin{figure}[ht] 
	\center
       \includegraphics[width=0.45\linewidth]{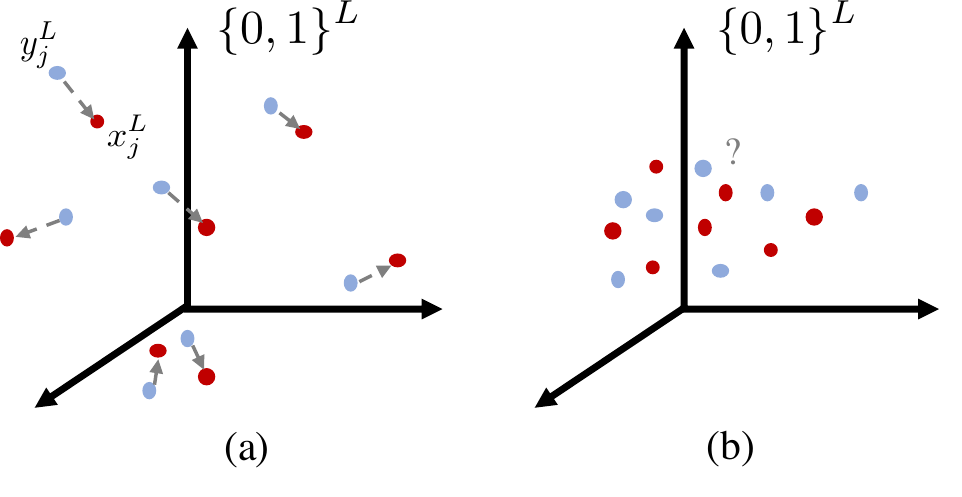} 
              \vspace{-3mm}
       \caption{Two opposite scenarios for estimating $S^N$ from $\left(X^{\ML},Y^{\NL}\right)$.\label{fig:points}}
\vspace{-2mm}
\end{figure}
In the first regime, the strings $x_1^\len,\ldots,x_\M^\len$ are all sufficiently distant from each other (in the Hamming sense).
Hence, the maximum likelihood estimate of $S^N$ given $X^{\ML} = x^{\ML}$ and $Y^{\NL} = y^{\ML}$ is ``close'' to the truth and we expect 
$H\left(S^N|X^{\ML} = x^{\ML},Y^{\NL} = y^{\ML}\right)$ to be small.
In the second regime, illustrated in Fig.~\ref{fig:points}(b), many of the sequences $x_1^\len,\ldots,x_\M^\len$ are close to each other.
So we have less information about $S^N$, and $H\left(S^N|X^{\ML} = x^{\ML},Y^{\NL} = y^{\ML}\right)$ may be large.

On the other hand, the term $H\left(Y^{\NL}\right)$ is maximized if the sequences $\left\{X_i^\len\right\}$ are independent and if their values are uniformly distributed in $\{0,1\}^\len$.
Hence, in order for $H\left(Y^{\NL}\right)$ to be large, we expect to be in the regime in Fig.~\ref{fig:points}(a) instead of the regime of Fig.~\ref{fig:points}(b).
This leads to a tradeoff of the terms $H\left(Y^{\NL}\right)$ and $H\left(S^N|X^{\ML},Y^{\NL}\right)$, which we exploit to prove Lemma~\ref{lem1}.
The detailed proof, which considers the general case where $q\ne 0$, is presented in the appendix.

\section{Discussion}

In this paper we studied the fundamental limits of models of DNA-based storage systems, characterized by random sampling of the input sequences, shuffling, and perturbing them. 
Specifically, we considered a large class of channel models that capture a range of specific instances of DNA storage channels, specified by choices of synthesis, sequencing, and DNA handling technologies.
We focused our analysis on two cases: (1) the error-free shuffling-sampling channel for an arbitrary sampling distribution $Q$ and (2) the noisy shuffling-sampling channel where $Q \sim {\rm Ber}(1-q)$ and the noisy channel is a BSC.
In both cases we proved that a simple index-based scheme is capacity optimal, with the caveat that, for the noisy shuffling-sampling channel, the capacity expression in \eqref{eq:capacity} only holds for the parameter regime of $(p,\beta)$ in the blue region of Figure~\ref{fig:capacity}, and most importantly only holds in the low-error regime.

While the parameter regime in Figure~\ref{fig:capacity} is arguably the most relevant one, an interesting question for future work 
 is whether expression~(\ref{eq:capacity}) is still the capacity of the BSC-shuffling channel if $\beta$ and $p$ do not satisfy (\ref{eq:condition}) (i.e., the gray region in Figure~\ref{fig:capacity}).
Notice that this is a high-noise, short-block regime, and it is reasonable to postulate that coding across the different sequences can be helpful and an index-based approach might not be optimal. 
Another natural question raised by Theorem~\ref{thm:bsc} is whether a similar capacity expression holds for different noisy channels, including corruptions induced by deletions and insertions.


\subsection{General symmetric channels}

Recalling that the capacity expression for the noisy shuffling-sampling channel given by \eqref{eq:capacity} is $(1-q)(\Cbsc -1/\beta)$, it is natural to ask whether for a different sequence-level noisy channel with capacity $C_{\text{noisy}}$, the corresponding noisy shuffling-sampling channel has capacity $(1-q)(C_{\text{noisy}}-1/\beta)$.
As it turns out, the converse proof in Section~\ref{sec:converse} can be extended to the class of \emph{symmetric} discrete memoryless channels (those channels are described in~\cite[Chapter  7.2]{cover2012elements}). 

Specifically, consider a noisy shuffling-sampling channel with sampling $Q \sim {\rm Ber}(1-q)$, and a symmetric discrete memoryless channel (SDMC) with output alphabet $\mathcal{Y}$.
It is then straightforward to generalize the converse proof in Section~\ref{sec:converse} to establish the following result.

\begin{theorem}
\label{thm:extension}
If $\beta$ is large enough, 
the capacity of the SDMC shuffling-sampling channel is given by
\al{
C = (1-q)(C_{\text{SDMC}} - 1/\beta). \label{eq:capacity2}
}
Moreover, if $\beta \leq \log|\mathcal{Y}|$, $C = 0$.
\end{theorem}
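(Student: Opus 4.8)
\textbf{Proof plan for Theorem~\ref{thm:extension}.}
The plan is to mirror the three-part structure of the $\mathrm{BSC}$ analysis. For achievability, I would use the same index-based construction as in Section~\ref{sec:bsc}: take an outer erasure code of block length $\M$ and rate $1-q-\epsilon_2$ over an alphabet of strings of length $\len(C_{\text{SDMC}}-1/\beta-\epsilon)$, prepend a unique index of length $\log\M$ to each of the $\M$ resulting symbols, and encode each length-$\len$ string with a good inner code for the SDMC operating at rate $C_{\text{SDMC}}-\epsilon$. The decoder decodes each received inner codeword, uses the recovered indices to sort and deduplicate, and treats an unseen molecule (probability $q$) or a rare index collision/swap caused by an inner decoding error as an erasure; since the inner error probability vanishes as $\M\to\infty$, the outer code absorbs these and the rate $(1-q)(C_{\text{SDMC}}-1/\beta)$ is achieved. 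The $\beta\le\log|\mathcal Y|$ claim follows from the counting argument of Section~\ref{sec:motivationconverse}: the number of achievable output types is at most polynomial in $\M$ raised to the power $|\mathcal Y|^{\len}=\M^{\beta\log|\mathcal Y|/\log 2}$-ish, so normalizing by $\M\len$ forces $R=0$ when $\beta\log|\mathcal Y|$ is too small relative to $1$ — more precisely one checks that the per-symbol bound analogous to~\eqref{eq:simplebound2} becomes $R\le C_{\text{SDMC}}-\log|\mathcal Y|/\beta$, which is $\le 0$ when $\beta\le\log|\mathcal Y|$ since $C_{\text{SDMC}}\le\log|\mathcal Y|$.

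For the converse I would reuse the chain of identities leading to~\eqref{eq:bound1}: write $X^{\ML}$ for the input, $Y^{\NL}\in\mathcal Y^{\N\len}$ for the output, $S^N$ for the sampling map, and $Z^{\NL}$ for the channel's internal noise realization (for an SDMC one can always represent the transition as a function of the input and an independent noise variable, so that $Y_k^\len$ is a deterministic function of $X^\len_{S(k)}$ and $Z^\len_k$, and $Z^\len_k$ is recoverable from $(X^\len_{S(k)},Y^\len_k)$ in the symmetric case — this is the one place where symmetry, via the permutation structure of the rows of the transition matrix, is used). Then $H(Y^{\NL})=I(X^{\ML};Y^{\NL})+H(Y^{\NL}|X^{\ML})$, and the term $H(S^N,Z^{\NL},Y^{\NL}|X^{\ML})$ expands, exactly as in~\eqref{eq:bound2}, into $\Ex[N](\log\M+\len H_{\text{noise}})+o(\ML)$, where $H_{\text{noise}}$ is the entropy of the per-symbol noise — which for a symmetric channel equals $\log|\mathcal Y|-C_{\text{SDMC}}$. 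Combining gives $\ML(R-\epsilon_\M)\le H(Y^{\NL})+H(S^N|X^{\ML},Y^{\NL})-(1-q)[\M\log\M+\ML(\log|\mathcal Y|-C_{\text{SDMC}})]+o(\ML)$, so everything reduces to a generalization of Lemma~\ref{lem1}.

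The main obstacle, and the only genuinely new work, is the SDMC analogue of Lemma~\ref{lem1}: showing $H(Y^{\NL})+H(S^N|X^{\ML},Y^{\NL})\le(1-q)\len\log|\mathcal Y|\cdot\Zeff/\M\cdots$ — precisely, $\le(1-q)\ML\log|\mathcal Y|+o(\ML)$ — for $\beta$ large. The tradeoff intuition of Section~\ref{sec:bsc} still applies: if $H(Y^{\NL})$ is near-maximal then the $X_i^\len$ are spread out, so $S^N$ is nearly determined by $(X^{\ML},Y^{\NL})$ and the second term is small; if the $X_i^\len$ cluster, the first term drops. Quantitatively, I would define a ``collision'' event for each pair $(i,j)$ — that $Y^\len_i$ could plausibly have come from $X^\len_j$, i.e. that the likelihood $P(Y^\len_i\mid X^\len_j)$ is non-negligible — and bound $H(S^N|X^{\ML},Y^{\NL})$ by $\Ex[\#\text{collisions}]\cdot\len$ plus lower-order terms, where the expected number of colliding pairs is controlled by the probability that two independent uniform length-$\len$ inputs produce outputs within the ``typical distortion ball'' of each other; this probability decays like $2^{-\len(\log|\mathcal Y|-H(2\text{-step noise}))}$-type quantity, giving a condition of the form $\log|\mathcal Y|-(\text{two-hop noise entropy})-2/\beta>0$ that reduces to $1-H(2p)-2/\beta>0$ in the binary case. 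The delicate point is isolating the correct ``two-hop'' quantity for a general symmetric channel — essentially the entropy of $Z^\len_i\oplus$-analogue combining two independent noise realizations under the channel's symmetry group — and verifying that ``$\beta$ large enough'' makes it positive; I expect this is exactly where one has to be careful, but that the argument is structurally identical to the appendix proof of Lemma~\ref{lem1} with Hamming distance replaced by the appropriate channel-induced pseudometric.
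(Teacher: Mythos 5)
Your overall route is the one the paper itself takes: the paper only sketches Theorem~\ref{thm:extension}, asserting that the converse of Section~\ref{sec:converse} carries over once one observes that (i) for a symmetric DMC the noise realization is recoverable from the input--output pair (the rows of the transition matrix are permutations of one another), (ii) the per-symbol noise entropy equals $\log|\mathcal{Y}|-C_{\text{SDMC}}$ because the uniform input induces the uniform output, and (iii) an analogue of Lemma~\ref{lem1} with right-hand side $(1-q)ML\log|\mathcal{Y}|+o(ML)$ holds for $\beta$ large enough, with the required $\beta$ depending on the transition matrix. Your achievability construction, your entropy decomposition, and your identification of the Lemma~\ref{lem1} analogue as the only genuinely new work all match this; your caveat that the ``two-hop'' separation condition (the analogue of $\alpha>2p$ and $1-H(2p)-2/\beta>0$) is the delicate point is exactly where the paper also stops short, deferring it to ``$\beta$ large enough.''

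The one concrete problem is your handling of the degenerate case. The counting argument of Section~\ref{sec:motivationconverse}, applied to output types in $\Z_+^{|\mathcal{Y}|^L}$ with $|\mathcal{Y}|^L=M^{\beta\log|\mathcal{Y}|}$, gives $R\le \log|\mathcal{Y}|-1/\beta$ bits per symbol, hence $C=0$ precisely when $\beta\le 1/\log|\mathcal{Y}|$ --- consistent with your first, informal remark that the condition is ``$\beta\log|\mathcal{Y}|$ small relative to $1$.'' Your ``more precise'' bound $R\le C_{\text{SDMC}}-\log|\mathcal{Y}|/\beta$ follows from neither the counting argument (which is noise-blind and yields $\log|\mathcal{Y}|$, not $C_{\text{SDMC}}$) nor the genie argument (which yields $(1-q)C_{\text{SDMC}}$ with no $\beta$ term), and the final inference is backwards: $C_{\text{SDMC}}\le\log|\mathcal{Y}|$ only gives $C_{\text{SDMC}}-\log|\mathcal{Y}|/\beta\le\log|\mathcal{Y}|\left(1-1/\beta\right)$, which is nonpositive for $\beta\le 1$, not for $\beta\le\log|\mathcal{Y}|$. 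Be aware that the threshold as printed in the theorem is itself suspect: a noiseless quaternary symmetric channel with $\beta=1\le 2=\log|\mathcal{Y}|$ clearly supports a positive rate via indexing, so the intended condition must be $\beta\le 1/\log|\mathcal{Y}|$ --- which is exactly what your counting argument, done correctly, delivers. Do not contort the bound to match the printed inequality.
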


For symmetric channels, capacity is achieved by making the distribution of the output uniform, which allows an analogous result to Lemma~\ref{lem1} to be obtained. 
How large $\beta$ needs to be for this statement to hold, depends on the specific channel transition matrix.

Beyond symmetric channels, new converse techniques must be developed in order to characterize the capacity of the corresponding noisy shuffling-sampling channels.

\subsection{Storage-Recovery Tradeoff} 
\label{sec:tradeoff}

Most studies on DNA-based storage emphasize the storage rate (or storage density), while sequencing  costs are disregarded.
From a practical point of view, it is important to understand, for a given storage rate, how much sequencing is required for reliable decoding, as this determines the time and cost required for retrieving the data.
Thus, characterizing the 
storage-recovery trade-off is of practical relevance relevance.

One way to do this is to consider, in addition to the storage rate, the \emph{recovery rate}, defined as the number of bits recovered per DNA base sequenced,
\al{
R_r \defeq \frac{\log|\C|}{\N\len}. \label{eq:Rr}
}
In a practical setting, one can control the amount of sequencing performed, typically specified in terms of the coverage depth $N/M$.
If we consider the error-free shuffling-sampling channel from Section~\ref{sec:noisefree}, in the case
where $Q$ is a Poisson distribution with mean $\cNM$, then $\cNM = N/M$ is the coverage depth, and one would like to choose a value of $\lambda$ that achieves a good trade-off between storage rate and recovery rate.

If we let $R_s$ be the storage rate (previously just $R$, see \eqref{eq:Rs}), from Theorem~\ref{thm:noisefree} and 
the fact that $R_s = \cNM R_r$, the $(R_s,R_r)$ feasibility region can be fully characterized.

\begin{cor}
For the error-free shuffling-sampling channel with $Q \sim {\rm Pois}(\cNM)$, rates $(R_s,R_r)$ are achievable if and only if, for some $c > 0$,
\al{
R_s  & \leq (1-e^{-\cNM})\left(1 - 1/\beta \right),\nonumber \\
R_r & \leq \frac{1-e^{-\cNM}}{\cNM}\left(1 - 1/\beta \right). 
\nonumber
}
\end{cor}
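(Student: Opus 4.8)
The plan is to read the corollary off Theorem~\ref{thm:noisefree} using the rigid link between the storage rate and the recovery rate that holds under Poisson sampling; this is genuinely a corollary, so there is no deep obstacle, and the only step requiring a moment's care is flagged below. Two preliminary facts set things up. First, for $Q \sim \Pois(\cNM)$ the probability that a molecule is never drawn is $q_0 = \Pr(N_i = 0) = e^{-\cNM}$, so Theorem~\ref{thm:noisefree} says that at coverage depth $\cNM$ the storage capacity equals $C(\cNM) \defeq (1-e^{-\cNM})(1-1/\beta)$. Second, $N = \sum_{i=1}^M N_i$ is a sum of independent $\Pois(\cNM)$ variables, so $N/M \to \cNM$ (weak law of large numbers, or just $\EX{N} = \cNM M$); hence a code of storage rate $R_s = \log|\C|/(M\len)$ sequenced at this coverage has recovery rate $\log|\C|/(N\len) = R_s \cdot M/N \to R_s/\cNM$. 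In particular, once $\cNM$ is fixed, the recovery rate is forced to equal $R_s/\cNM$, so the only design freedom is the choice of coverage $\cNM$.

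The ``only if'' direction is then immediate: if $(R_s,R_r)$ is achieved at some coverage $\cNM$, then Theorem~\ref{thm:noisefree} gives $R_s \le C(\cNM)$, while the coupling above gives $R_r = R_s/\cNM \le C(\cNM)/\cNM = \frac{1-e^{-\cNM}}{\cNM}(1-1/\beta)$, which is exactly the displayed pair of inequalities for this value of $\cNM$.

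For the ``if'' direction, suppose the two inequalities hold for some coverage $\cNM_0$. The one point to be careful about is that one should \emph{not} simply code at coverage $\cNM_0$, since there the recovery rate would be $R_s/\cNM_0$, which may be strictly below the target $R_r$. Instead I would set $\rho \defeq \max\{R_s,\ \cNM_0 R_r\}$ and code at coverage $\cNM_0$ with rate $\rho$. The hypotheses read precisely $R_s \le C(\cNM_0)$ and $\cNM_0 R_r \le C(\cNM_0)$, hence $\rho \le C(\cNM_0)$, and Theorem~\ref{thm:noisefree} supplies a sequence of codes of storage rate $\rho$ with vanishing error probability (for $\rho < C(\cNM_0)$; the boundary case follows by continuity together with the standard convention that every rate strictly below capacity is achievable). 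Such a scheme has storage rate $\rho \ge R_s$ and, by the coupling, recovery rate tending to $\rho/\cNM_0 \ge R_r$, so it achieves $(R_s,R_r)$. Equivalently, and perhaps more transparently, one may take the coverage to be $\cNM^\star \defeq R_s/R_r$, for which $R_r = R_s/\cNM^\star$ with equality; that $R_s \le C(\cNM^\star)$ still holds follows because $C(\cNM)$ is increasing while $C(\cNM)/\cNM = \frac{1-e^{-\cNM}}{\cNM}(1-1/\beta)$ is decreasing in $\cNM$ (write $\frac{1-e^{-\cNM}}{\cNM} = \int_0^1 e^{-t\cNM}\,dt$), so moving from $\cNM_0$ to $\cNM^\star$ cannot violate either inequality. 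The remaining loose ends are routine: the precise sense in which the random ratio $\log|\C|/(N\len)$ converges, and the open/closed status of the boundary of the feasible region, both handled by the concentration $N/M \to \cNM$.
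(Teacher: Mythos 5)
Your proof is correct and follows exactly the route the paper takes: the paper's entire justification is the observation that $q_0=e^{-\cNM}$ under Poisson sampling together with the identity $R_s=\cNM R_r$ (from $N/M\to\cNM$), combined with Theorem~\ref{thm:noisefree}. Your write-up merely fills in the routine details (the concentration of $N/M$ and the choice of operating coverage for the achievability direction) that the paper leaves implicit.
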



This region is illustrated in Figure~\ref{fig:region}.
This tradeoff suggests that a good operating point would be achieved by not trying to maximize the storage rate (which technically requires $\cNM \to \infty$).
Instead, by using some modest coverage depth $\cNM=1,2,3$, most of the storage rate ($63 \%,86\%,95\%$, respectively) can be achieved.
This is somewhat in contrast to what has been done in the practical DNA storage systems that have been developed thus far, where the decoding phase utilizes very deep sequencing.

To be concrete, suppose that we are interested in minimizing the cost for storing data on DNA. 
Synthesis costs are currently larger than sequencing costs by about a factor $q = 10,000$-$100,000$. 
Thus, if our goal is to minimize the cost for synthesizing and sequencing a given number of bits in DNA, the cost is proportional to $q/R_s + 1/R_r = \frac{q+\cNM}{1-e^{-\cNM}}$. 
This quantity can be maximized over $\cNM$, to obtain the optimal cost per bit stored.
For example, for $q=10000$, $\cNM \approx 9.2$. 
Moreover, one might be interested in optimizing other quantities such as reading time or considering a scenario where the data is read more than once.

\begin{figure} [h]
	\center
       \includegraphics[width=5.6cm]{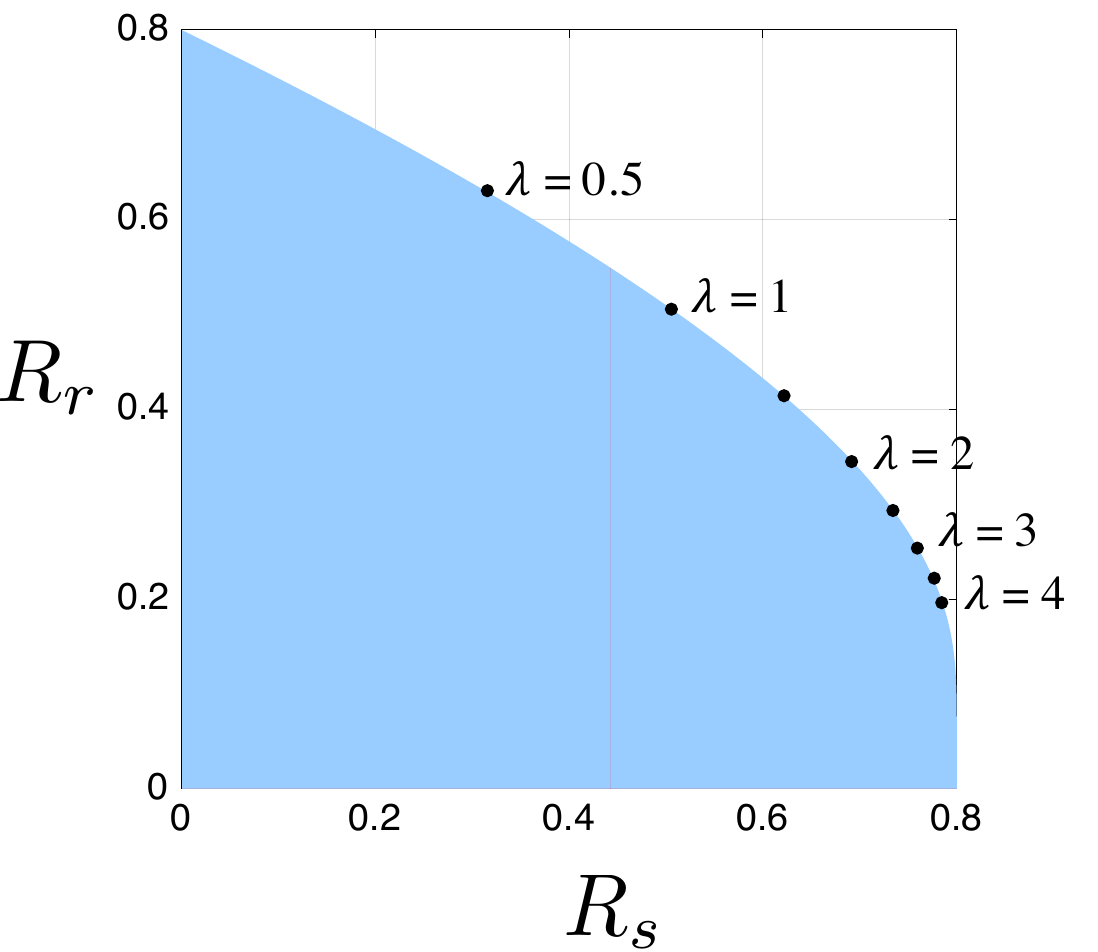} 
       \caption{$(R_s,R_r)$ feasibility region for $\beta = 5$. \label{fig:region}}
\end{figure}


\subsection{Storing data on short molecules}

Throughout this paper, we focused on the regime $\len = \beta \log \M$, with $\beta \geq 1$. 
For $\beta \leq 1$, no positive rate can be achieved (as shown by Theorem~\ref{thm:noisefree}).
However, motivated by the fact that it is in general much easier to synthesize very short sequences of DNA than longer ones, it is interesting to ask whether with very short sequences, it is still possible to build useful DNA storage systems.

Towards this goal, in this section we briefly discuss how fast the rate tends to zero in the regime when $\beta\leq1$. 
Notice that, when $\beta \leq 1$, the total number of distinct molecules of length $L = \beta \log M$ is $2^{\beta \log M} = M^\beta < M$.
Hence, it is impossible to write $M$ distinct molecules.
In this case, it is reasonable to study the amount of bits that can be stored relative to the number of potentially distinct molecules. 
Towards this goal we define the short-molecule rate $\tilde R$ as 
\begin{align}
\tilde R \defeq \frac{\log|\C|}{\M^\beta \len}.
\end{align}

\begin{proposition}
\label{thm:shortseqs}
Suppose that each molecule is drawn $N_i \sim Q$ times, with expectation $\Ex{N_i}>0$, and that  $\beta < 1$. Then, any achievable short-molecule rate satisfies $\tilde R \leq 1/\beta - 1$.
\end{proposition}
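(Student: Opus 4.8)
\emph{Proof plan.} The plan is to combine Fano's inequality with an entropy bound on the channel output, using crucially that when $\beta<1$ there are only $2^\len=\M^\beta$ distinct molecule values, so the (unordered) output is described by a frequency vector with at most $\M^\beta$ nonzero coordinates; this is what makes the total output entropy scale like $\M^\beta\log\M$ rather than like $\M$.

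First I would identify the output with the histogram $g\in\Z_+^{2^\len}$, where $g(\y)$ is the number of drawn strings equal to $\y\in\Sigma^\len$. Writing $t(\y)$ for the number of input copies of $\y$, we have $g(\y)=\sum_{i:\x_i=\y}\N_i$, hence $\EX{g(\y)}=\EX{\N_i}\cdot\EX{t(\y)}$, and summing over $\y$ gives $\sum_\y\EX{g(\y)}=\EX{\N_i}\,\M$. With $W$ the (uniform) message, Fano's inequality gives $\log|\C|\,(1-P_e)\le I(W;g)+1\le H(g)+1$, so it is enough to show $H(g)\le(1-\beta)\,\M^\beta\log\M+O(\M^\beta)$.

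Next I would bound $H(g)$ as follows. By subadditivity of entropy, $H(g)\le\sum_{\y\in\Sigma^\len}H(g(\y))$; this lets me avoid analyzing the joint law of the coordinates of $g$, which is message-dependent and awkward. Each $g(\y)$ is a nonnegative-integer random variable, so by the maximum-entropy property of the geometric distribution $H(g(\y))\le\log(\EX{g(\y)}+1)+\log e$. Only the at most $\M^\beta$ molecule values actually used by the code give a nonzero term, and by concavity of $s\mapsto\log(s+1)$ the quantity $\sum_\y\log(\EX{g(\y)}+1)$, subject to $\sum_\y\EX{g(\y)}=\EX{\N_i}\M$ and at most $\M^\beta$ active coordinates, is maximized by spreading the total evenly over exactly $\M^\beta$ coordinates (spreading over more coordinates only increases it), which gives $H(g)\le\M^\beta\bigl(\log(\EX{\N_i}\,\M^{1-\beta}+1)+\log e\bigr)=(1-\beta)\,\M^\beta\log\M+O(\M^\beta)$.

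Finally I would plug this into Fano's inequality, divide by $\M^\beta\len=\beta\,\M^\beta\log\M$, and let $\M\to\infty$ (so $P_e\to0$), obtaining $\tilde R\le(1-\beta)/\beta=1/\beta-1$. The step I expect to require the most care is the per-coordinate entropy bound: since the statement assumes only $\EX{\N_i}>0$ and imposes no tail or variance condition on $Q$, $H(g(\y))$ has to be controlled from its mean alone, and the geometric maximum-entropy bound is exactly the tool for that. It also explains the constant: a finite-variance central-limit estimate would give the sharper $H(g(\y))\approx\tfrac12\log t(\y)$ and hence $\tilde R\le\tfrac12(1/\beta-1)$, but the weaker bound above is the right one to use for a statement uniform over all admissible $Q$.
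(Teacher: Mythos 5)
Your proof is correct, but it takes a genuinely different route from the paper's. The paper reuses the genie-aided argument of Section~\ref{sec:converseCounting}: it tags the input molecules, reduces the output to a frequency vector $\vf\in\Z_+^{\M^\beta}$ whose $\ell_1$ norm is the number of \emph{distinct} molecules sampled (hence at most $\M$ deterministically), and then bounds $H(\vf)$ by the logarithm of the number of such vectors via Lemma~\ref{lem:comb}, using the binomial bound in the form ${a+b-1\choose a-1}\le\bigl(e(a+b)/a\bigr)^{a}$ so that the exponent is the small quantity $a=\M^\beta$. You instead skip the genie, work with the raw output histogram $g$ (which is indeed equivalent to the unordered multiset output in the noise-free setting), and bound $H(g)$ by subadditivity plus the geometric maximum-entropy bound $H(g(\y))\le\log(\EX{g(\y)}+1)+\log e$ and Jensen. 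Both arguments rest on the same structural fact---the output lives in $\Z_+^{\M^\beta}$ with $O(\M)$ total mass---and both give the leading term $(1-\beta)\M^\beta\log\M$. What your version buys is that it needs no genie and no high-probability event: everything is controlled through first moments, which makes it transparent that only $\EX{N_i}$ enters. Had you tried a direct support-counting bound on $g$ itself, you would have needed tail control on the random total $N=\|g\|_1$; your mean-based bound sidesteps that, at the cost of implicitly requiring $\EX{N_i}<\infty$ (which the paper's definition of $\cNM$ presumes, and which the genie-based count does not need, since the number of distinct sampled molecules is at most $\M$ regardless of the tail of $Q$). Your closing remark about the $\tfrac12\log t(\y)$ heuristic is a nice observation but is not needed for the claim.
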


The proof, provided in the appendix, is based on the genie-aided and counting-based argument used in Section~\ref{sec:converseCounting}. 
The proposition guarantees that the (true) rate $R$ tends to zero at least as $1/M^{1-\beta}$.
While at first sight, it might seem surprising that there is no dependency on $\Emolseen$, this is reasonable, since in the regime of $\beta < 1$, no more than $\M^\beta$ distinct molecules exist. 
Thus, we see each fragment about 
$\EX{\N}/\M^\beta = 
\EX{N_i} \M/\M^\beta = \EX{N_i} \M^{1-\beta}$ many times, which tends to infinity, regardless of $Q$.

We point out that index-based coding schemes cannot achieve the scaling $R = \Theta(M^\beta L)$ suggested by the proposition.
To see this, suppose we encode the sequences by using $\len - 1$ bits for the index and only one bit for the information, and repeat each such segment $\M / (2^{\len - 1}) = 2 \M^{1-\beta}$ many times. 
We see each segment at least once with probability one as $\M \to \infty$. Thus we reliably store $2^{\len - 1} = \M^\beta /2 $ bits. 
Simple variations of this scheme (where we change the number of bits allocated to the index) can be similarly shown to only encode $\Theta(M^\beta)$ bits reliably.
Hence, for the regime $\beta \leq 1$, our upper bound to the number of bits that can be reliably stored is $\Theta(M^\beta L)$, while our lower bound is $\Theta(M^\beta)$, and it is an open question what the correct scaling is.



\subsection{Outlook}

In this paper we took steps towards the understanding of the fundamental limits of DNA-based storage systems. 
We proposed a simple model capturing the fact that molecules are stored in an unordered fashion, are short, and are corrupted by individual base errors. 
Our results show that a simple index-based coding scheme is asymptotically optimal for a large set of parameter choices.

While the model captures (moderate) substitution errors which are the prevalent error source on a nucleotide level of current DNA storage systems, the current generation of systems relies on \emph{low-error} synthesis and sequencing technologies that are relatively expensive and limited in speed.
A key idea towards developing the next-generation of DNA storage systems is to employ \emph{high-error}, but cheaper and faster synthesis and sequencing technologies such as light-directed maskless synthesis of DNA and nanopore sequencing. 
Such systems induce a significant amount of insertion and deletion errors. Thus, and important area of further investigation is to understand the capacity of channels which introduce deletions and insertions as well.

\subsection*{Acknowledgements}
IS and RH thank Kannan Ramchandran and David Tse for helpful discussions in the early stages of this work. RH would like to thank Robert Grass for helpful discussions on modeling aspects of DNA storage channels.

\printbibliography

\newpage

\appendix


\section{Proof of Lemma~\ref{lem:comb} }
Notice that vectors  $x \in \Z_+^{a}$ with $\|x \|_1 = b$ are in one-to-one correspondence with 
binary strings containing $(a-1)$ $0$s and $b$ $1$s.
For $x = (x_1,\ldots,x_a)$,
the corresponding string is 
\al{
\underbrace{1 \, \ldots \, 1}_{x_1} \, 0 \, \underbrace{1 \, \ldots \, 1}_{x_2} \, 0 \, \ldots \, 0 
\underbrace{1 \, \ldots \, 1}_{x_a}.
}
It is clear that such a string has $(a-1)$ $0$s and $b$ $1$s,
and that distinct strings with $(a-1)$ $0$s and $b$ $1$s correspond to distinct vectors $x$.
The number of distinct strings of this form is
\aln{
\frac{(a-1+b)!}{(a-1)!\, b!} = {a+b-1 \choose b}.
}
The upper bound in the statement of the lemma is a standard bound for binomial coefficients.


\section{Proof of Lemma~\ref{lem:conc} under a sampling-with-replacement model} 
\label{app:lemma2}

As it turns out, Lemma~\ref{lem:conc} can be proved under a sampling-with-replacement model.
Under this model, instead of sampling each molecule according to a probability distribution $Q$, $N$ sequences are sampled out of the pool of $M$ stored sequences.
Since there are multiple copies of each molecule in the pool due to PCR, we consider a sampling with replacement model.
By proving Lemma~\ref{lem:conc} in this setting,
one can establish a version of Theorem~\ref{thm:noisefree} for the sampling-with-replacement shuffling channel, as previously described in \cite{heckel_fundamental_2017}.

Consider the same genie-based argument described in Section~\ref{sec:converseCounting}.
In the sampling-with-replacement setting, the $\ell_1$ norm of the frequency vector $\vf$ at the output of the genie-aided channel is distributed as the number of distinct coupons obtained by drawing $\N = \cNM \M$ times with replacement from a set of $\M$ distinct coupons. 
Thus, Lemma~\ref{lem:conc} is an immediate consequence of the following stronger statement. 

\begin{lemma}
Let $Q$ be the number of distinct coupons obtained by drawing $\N = \cNM \M$ times with replacement from a set of $\M$ distinct coupons. 
We have that, for any $\delta > 0$,
\begin{align*}
\PR{Q \geq (1- e^{-\cNM} + \delta) \M}
\leq
\frac{1}{\M} \frac{2e^{2\cNM}}{2\left( \ln \left( \frac{e^{-\cNM}}{e^{-\cNM} - \delta} \right) - \frac{e^{\cNM}}{\M}\right)^2}.
\end{align*}
\end{lemma}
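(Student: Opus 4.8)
The plan is to translate the event into a statement about the coupon-collecting \emph{time} and then apply Chebyshev's inequality. Set $k \defeq \lceil (1-e^{-\lambda}+\delta)M\rceil$, write $N = \lambda M$, and let $T_k$ denote the number of draws (with replacement, from the $M$ coupons) needed before $k$ distinct coupons have appeared. Since the number of distinct coupons seen is nondecreasing in the number of draws, $\{Q \geq k\} = \{T_k \leq N\}$, so it suffices to upper bound $\Pr(T_k \leq N)$; and since for large $M$ we have $N < \mathbb{E}[T_k]$, this is a lower-tail estimate for $T_k$.

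The next step is the classical decomposition $T_k = \sum_{j=0}^{k-1} G_j$, where $G_j$ counts the number of extra draws needed to pass from $j$ to $j+1$ distinct coupons; the $G_j$ are independent with $G_j \sim \mathrm{Geometric}\!\left(\tfrac{M-j}{M}\right)$, so that $\mathbb{E}[G_j] = \tfrac{M}{M-j}$ and $\mathrm{Var}[G_j] = \tfrac{jM}{(M-j)^2} \leq \tfrac{M^2}{(M-j)^2}$. Hence
\[
\mathbb{E}[T_k] = M\sum_{i=M-k+1}^{M}\frac1i, \qquad \mathrm{Var}[T_k] \leq \sum_{j=0}^{k-1}\frac{M^2}{(M-j)^2}.
\]
Lower-bounding the harmonic sum by $\int_{M-k+1}^{M+1}\tfrac{dx}{x} = \ln\tfrac{M+1}{M-k+1}$ and using $k \geq (1-e^{-\lambda}+\delta)M$ (so $M-k+1 \leq (e^{-\lambda}-\delta)M + 1$), a short calculation with $N = \lambda M$ gives $\mathbb{E}[T_k] - N \geq s$, where $s \defeq M\big(\ln\tfrac{e^{-\lambda}}{e^{-\lambda}-\delta} - \tfrac{e^\lambda}{M}\big)$ is exactly the deviation appearing in the statement. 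For the variance, bounding each $\tfrac{M-j}{M}$ below by $\tfrac{M-k+1}{M}$ and using $\tfrac kM \leq 1-e^{-\lambda}+\delta$ yields $\mathrm{Var}[T_k] \leq e^{2\lambda}M$ for $M$ large (which is where the only meaningful range $0 < \delta < e^{-\lambda}$ is used). Chebyshev's inequality then gives
\[
\Pr(T_k \leq N) \leq \Pr\big(|T_k - \mathbb{E}[T_k]| \geq s\big) \leq \frac{\mathrm{Var}[T_k]}{s^2} \leq \frac{e^{2\lambda}M}{M^2\big(\ln\tfrac{e^{-\lambda}}{e^{-\lambda}-\delta} - \tfrac{e^\lambda}{M}\big)^2},
\]
which is the claimed bound (rewritten as $\tfrac1M\cdot\tfrac{2e^{2\lambda}}{2(\,\cdot\,)^2}$).

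The only place where real care is needed is this chain of elementary estimates: making the lower bound on $\mathbb{E}[T_k] - N$ come out in precisely the stated form requires tracking the $\lceil\cdot\rceil$ rounding, the gap between $\sum_{i=M-k+1}^{M}1/i$ and the integral $\ln\tfrac{M}{M-k}$, and the $O(1/M)$ discrepancy between $\tfrac{M}{M-k+1}$ and $\tfrac{1}{e^{-\lambda}-\delta}$, and one must check that none of these lower-order terms overwhelm the crude $e^{2\lambda}M$ variance bound. Everything else — the identity $\{Q \geq k\} = \{T_k \leq N\}$, the independence of the $G_j$, and the Chebyshev step — is routine. (A much stronger, exponential-in-$M$ bound would instead follow from McDiarmid's bounded-differences inequality applied to $Q$ viewed as a function of the $N$ i.i.d.\ draws, since changing one draw changes $Q$ by at most $1$; but the $O(1/M)$ bound above already suffices to recover Lemma~\ref{lem:conc}.)
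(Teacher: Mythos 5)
Your proposal is correct and follows essentially the same route as the paper: translate $\{Q\geq k\}$ into $\{T_k\leq N\}$, decompose $T_k$ into independent geometric waiting times, lower-bound $\mathbb{E}[T_k]-N$ via the harmonic sum, crudely bound the variance, and apply Chebyshev. The only small imprecision is your claim that the variance bound $O(e^{2\lambda}M)$ needs just $\delta<e^{-\lambda}$ — as $\delta\uparrow e^{-\lambda}$ the factor $\alpha/(1-\alpha)^2$ blows up, so one should first reduce (as the paper does) to $\delta\leq e^{-\lambda}/2$, which is harmless since both sides of the inequality behave monotonically enough to permit it.
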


\begin{proof}
Since $\PR{Q \geq (1- e^{-\cNM} + \delta) \M}$ is a non-increasing function of $\delta$, we can assume that $\delta \in (0, e^{-\cNM}/2]$, as that simplifies the expressions.
Let $t_i$ be the number of draws to collect the $i$-th coupon after $(i-1)$ coupons have been collected, $i=0,\ldots,\M-1$, and consider the number of draws for obtaining $\alpha\M$ distinct coupons 
$
T \defeq \sum_{i=0}^{\alpha \M-1} t_i$
where
$\alpha \defeq 1 - e^{-\cNM} + \delta.
$
Due to
\[
\PR{Q \geq (1- e^{-\cNM} + \delta) \M}
=
\PR{Q \geq \alpha \M}
= 
\PR{T \leq \N}, 
\]
the lemma will follow by upper-bounding $\PR{T \leq \N}$ using Chebyshev's inequality. 
We first note that with 
$\EX{t_i} = 1/p_i, p_i \defeq \frac{\M-i}{\M}$ and
$\Var{t_i} = \frac{1-p_i}{p_i^2}$, 
we obtain 
\begin{align}
\EX{T} &
=
\sum_{i=0}^{\alpha \M-1} \EX{t_i}
= 
\M \sum_{i=0}^{\alpha \M-1} \frac{1}{\M-i}  \nonumber \\
&= 
\M(H_{\M} - H_{\M (1-\alpha) }) \nonumber \\
&\geq
\M( \ln \M - \ln (\M (1-\alpha)) ) - \frac{1}{2(1-\alpha)} \nonumber  \\
&\geq
-\M \ln(1-\alpha) - e^{\cNM}
= -\M \ln(e^{-\cNM} - \delta) - e^{\cNM} \nonumber \\
&= 
\M \cNM 
+ \M \underbrace{
\ln \left( \frac{e^{-\cNM}}{e^{-\cNM} - \delta} \right)
}_{\xi} - e^{\cNM}
= \N + \M \xi - e^{\cNM}. \nonumber
\end{align}
Here, $H_\M = \sum_{i=1}^\M \frac{1}{i}$ is the $\M$-th harmonic number, and the first inequality follows by the asymptotic expansion
\[
0 \leq H_n - \ln n - \gamma 
= 
\frac{1}{2n}  - \frac{1}{12 n^2} +  \frac{1}{120 n^4}  - \ldots 
\leq \frac{1}{2n},
\]
where $\gamma$ is the Euler-Mascheroni constant. 
The second inequality follows from $\frac{1}{1 - \alpha} \leq \frac{1}{e^{-\cNM} - e^{-\cNM}/2 } = 2 e^{\cNM}$. 
Moreover, the variance can be upper-bounded as
\begin{align}
\Var{T}
&=
\sum_{i=0}^{\alpha \M-1} \Var{t_i}
=
\sum_{i=0}^{\alpha \M-1} \frac{i\M}{(\M-i)^2} \nonumber \\
&
\leq 
\M \frac{\alpha}{2 (1-\alpha)^2}
\leq 
\M 2e^{2\cNM}.
\label{eq:boundvar}
\end{align}
Using the bound on the expectation and Chebyshev's inequality, we have for any $\beta>0$, that
\begin{align}
&\PR{ -T  + \N + \M \xi - e^{\cNM}  > \beta } \nonumber \\
&\hspace{2cm}\leq
\PR{ -T  + \EX{T}  > \beta }
\leq \frac{\Var{T}}{\beta^2}. \nonumber
\end{align}
Choosing $\beta = \M \xi - e^{\cNM}$ and using the upper bound on $\Var{T}$ given in~\eqref{eq:boundvar}, yields 
$
\PR{T \leq \N}
%
%
\leq 
\frac{1}{\M} \frac{2e^{2\cNM}}{\left(\xi - \frac{e^{\cNM}}{\M}\right)^2},
$
which concludes the proof. 
\end{proof}



\section{Proof of Lemma \ref{lem1}}


Let $Y^{\len}_1,\ldots, Y^{\len}_\N$ 
be the $\N$ strings observed at the output of the channel.
First we notice that, since $\N$ is a function of $Y^{\NL}$, we can write 
\al{
H\left(Y^{\NL}\right) & + H\left(S^N|X^{\ML},Y^{\NL}\right) \nonumber \\
&= H\left(Y^{\NL},N \right) + H\left(S^N|X^{\ML},Y^{\NL}, N \right) \nonumber \\
&= H\left(N \right) + H\left(Y^{\NL} | N \right) + H\left(S^N|X^{\ML},Y^{\NL}, N \right) \nonumber \\
&= H(N) + \sum_{n > 0} \Pr(N=n) \left[ H\left(Y^{\NL} | N = n \right) + H\left(S^N|X^{\ML},Y^{\NL}, N =n \right) \right]. \label{eq:lemproof1}
}
We will show that 
\al{
H\left(Y^{\NL} | N = n \right) + H\left(S^N|X^{\ML},Y^{\NL}, N =n \right) \leq nL + n \log \frac{M}{n} + o(ML),
}
which, when plugged back into \eqref{eq:lemproof1} implies that
\al{
H\left(Y^{\NL}\right) + H\left(S^N|X^{\ML},Y^{\NL}\right) 
& \leq \Ex[N] L + \Ex[N \log M/N] + o(ML) \nonumber \\
& \leq (1-q)ML + o(ML),
}
where we used the fact that $H(N) = o(ML)$, $\Ex[N] = (1-q)M$, and Jensen's inequality applied to the concave function $x \log (M/x)$.
This will establish the lemma.


In order to capture whether we are in the regime of Figure~\ref{fig:points}(a) or (b), we let 
$T$ be the largest subset of $[1:n]$
  so that, for any $i,j \in T$, 
$
d_H \left( Y_i^{\len}, Y_j^{\len} \right) \geq \alpha \len,
$
where $d_H$ is the Hamming distance and 
$\alpha > 2p$. 
We assume that in case of ties, an arbitrary tie-breaking rule is used to define $T$ (the actual choice will not be relevant for the proof).

Let $\Ex_n$ be the expectation conditioned on $N=n$; i.e.,  $\Ex_n[\cdot] = \Ex[\cdot | N=n]$.
We prove that, given the conditions in Lemma~\ref{lem1}, the following two bounds involving $\Ex_n|T|$ hold:
\begin{subequations}
\begin{align}
\text{(B1)\quad} & H\left(Y^{\NL}| N= n\right) \leq L \Ex_n|T|  
+ (n-\Ex_n|T|)\left( \log \Ex_n|T| + \len H(\alpha) \right) + o(\ML),
 \label{eq:B1}
 \\
\text{(B2)\quad} & H\left(S^N|X^{\ML},Y^{\NL}, N = n\right) \leq 
 n \log M
- \Ex_n|T| \log \Ex_n|T| + o(ML). 
 \label{eq:B2}
\end{align}
\end{subequations}
For large $\Ex_n|T|$, we are typically in the regime of Figure~\ref{fig:points}(a), while Figure~\ref{fig:points}(b) corresponds to the case where $\Ex_n|T|$ is small.
The bounds above capture the tension between the terms $H\left(Y^{\NL}|N=n\right)$ and $H\left(S^N|X^{\ML},Y^{\NL},N=n\right)$ because (B2) is decreasing in $\Ex_n|T|$, while (B1) is increasing in $\Ex_n|T|$ (provided that $\beta(1-H(\alpha)) \geq 1$).
Combining (B1) and (B2),
\al{
& H\left(Y^{\NL}|N=n\right) + H\left(S^N|X^{\ML},Y^{\NL},N=n\right) \nonumber \\
& \quad \leq  \len \Ex_n|T| + (n-\Ex_n|T|)\left( \log \Ex_n|T| + \len H(\alpha) \right) \nonumber \\
& \quad \quad \quad + n \log M - \Ex_n|T| \log \Ex_n|T| + o(\ML) \nonumber \\ 
& \quad = \Ex_n|T| L (1- H(\alpha))  + n \log \Ex_n|T|- 2 \Ex_n|T| \log \Ex_n|T| \nonumber \\
& \quad \quad \quad  + n \len H(\alpha) + n \log M + o(\ML).
\label{eq:boundET} 
}
Replacing $\Ex_n|T|$ with $x$ and ignoring the terms in this upper bound that do not involve $x$, we have the expression
\aln{
f(x) \defi  \gamma x  \log \M  + n \log x - 2 x\log x,
}
where we define $\gamma = \beta (1-H(\alpha))$.
For $x > 0$, we have 
\aln{
f'(x) & = \frac{1}{\ln(2)}\left( \gamma \ln \M + \frac{n}{x} - 2\ln x - 2 \right) \\
& > \frac{1}{\ln(2)}\left( \gamma \ln \M - 2\ln x - 2 \right) \\
& = \frac{2}{\ln(2)}\left( \ln \frac{\M^{\gamma/2}}{x} - 1 \right).
}
Hence $f'(x) > 0$ if
\al{
x < e^{-1} M^{\gamma/2}. \label{eq:maxx}
}
We see that, as long as $\gamma > 2$, the right-hand side of (\ref{eq:maxx}) is greater than $M$ for $M$ large enough.
This means that $f(x)$ is increasing for $1 \leq x \leq M$.
Since $\Ex_n|T| \leq n \leq M$, $f$ must attain its maximum at $f(n)$.
Therefore, (\ref{eq:boundET}) can be upper-bounded by setting $x = \Ex_n|T| = n$, which yields
\aln{
H\left(Y^{\NL}| N = n\right) & + H\left(S^N |X^{\ML},Y^{\NL},N=n\right) \leq nL + n \log \frac{M}{n}+ o(\ML).
}
Notice that this holds if, for some $\alpha > 2p$,
\aln{
\gamma = \beta(1-H(\alpha)) > 2 \Leftrightarrow 1 - H(\alpha) - 2/\beta > 0.
}
From the continuity of $H(\cdot)$, such $\alpha$ can be found if \eqref{eq:condition} holds, proving the lemma.
It remains to prove (B1) and (B2).

\subsection{Proof of (B1)}
Since $T$ is a deterministic function of $Y^{\NL}$ and can take at most $2^n$ values, 
\al{
H\left(Y^{\NL}| N = n\right) 
&= H\left(Y^{\NL}, T | N = n \right)  \nonumber \\
&=  H\left( T | N=n \right) +  H\left(Y^{\NL} | T, N=n \right) \nonumber\\
&\leq n + \sum_{t \subseteq [1:n]} \Pr\left( T = t | N = n \right) H\left(Y^{\NL} | T = t, N=n \right). \label{eq:bound3} 
}
Next we notice that, for a given $t$, we can write
\al{
H\left(Y^{\NL} | T = t, N=n \right) & = H\left( [Y^{\len}_i : i \in t] | T = t, N=n \right) \nonumber \\
& \hspace{-8mm} + H\left( [Y^{\len}_i : i \not\in t] | T = t, N=n, [Y^{\len}_i : i \in t] \right). \label{eq:bound4} 
}
The first term in  (\ref{eq:bound4}) is trivially bounded as 
\aln{
H\left( [Y^{\len}_i : i \in t] | T = t, N=n \right) \leq |t| L.
}
Each of the remaining length-$\len$ strings $Y^{\len}_i$ with $i \notin t$ must be within a distance $\alpha L$ from one of the strings in $[Y^{\len}_i : i \in t]$, from the definition of $T$.
Hence, conditioned on $[Y^{\len}_i : i \in t]$, each of them can only take at most $|t| |B(\alpha L)|$ values, where $B(\alpha L)$ is a Hamming ball of radius $\alpha \len$.
Since $|B(\alpha \len)| \leq 2^{\len H(\alpha)}$ for $\alpha < 1/2$, we bound the second term in (\ref{eq:bound4}) as
\aln{
& H\left( [Y^{\len}_i : i \not\in t] | T = t, N=n, [Y^{\len}_i : i \in t] \right) 
\leq (n-|t|) \left( \log |t| + \len H(\alpha) \right).
}
Using these bounds back in (\ref{eq:bound3}), we obtain
\al{
\hspace{-2mm} H\left(Y^{\NL} | N=n\right)  \nonumber 
& \leq n + \Ex_n  \left[  L |T| + (n-|T|) \left( \log |T| + \len H(\alpha) \right) | N = n \right] + o(\ML) \nonumber \\ & \leq L \Ex_n|T| + (n-\Ex_n|T|)\left( \log \Ex_n|T| + \len H(\alpha) \right) + o(ML),
\label{eq:bound5}
}
where we used the fact that $(n-x) \log x$ is a concave function of $x$ and Jensen's inequality. 





\subsection{Proof of (B2)}
Since $T$ is a deterministic function of $Y^{\NL}$,
\al{
& H\left(S^N|X^{\ML},Y^{\NL}, N = n\right) = H\left(S^N|X^{\ML},Y^{\NL}, T, N = n\right) \nonumber \\
& =  \sum_{t \subseteq [1:n]} \Pr\left( T = t | N = n\right) H\left(S^N|X^{\ML},Y^{\NL}, T = t, N = n\right) \nonumber \\
& \leq  \sum_{t \subseteq [1:n]} \Pr\left( T = t | N = n\right) \sum_{i=1}^n H\left(S(i)|X^{\ML},Y^{\NL}, T = t , N = n\right). \label{eq:bound6} 
}
Next we notice that the probability that $\delta \len$ or more errors occur in a single length-$\len$ string, for $\delta > p$, is at most $2^{-\len D(\delta \| p)}$ by the Chernoff bound (where $D(\cdot \| \cdot)$ is the binary KL divergence).
If we let $\E_i$ be the event that $d_H\left( X_{S(i)}^\len, Y_i^\len \right) \geq \delta L$, then we have
\aln{
\Pr(\E_i) \leq 2^{-\len D(\delta \| p)} = \M^{-\beta D(\delta \| p)}.
}
The conditional entropy term in (\ref{eq:bound6}) is upper bounded by
%
%
\al{
& H\left(S(i), \one_{\E_i}|X^{\ML},Y^{\NL}, T = t, N = n\right) \nonumber \\
& \quad  \leq H(\one_{\E_i}| T=t,N = n) + \Pr(\E_i | T = t, N = n) H\left(S(i)|X^{\ML},Y^{\NL}, T = t, N = n, \E_i \right) \nonumber \\
& \quad \quad \quad + \Pr(\bar\E_i | T = t, N = n) H\left(S(i)|X^{\ML},Y^{\NL}, T = t, N = n, \bar\E_i \right) \nonumber \\
& \quad \leq 1 + \Pr(\E_i | T = t, N = n) \log \M \nonumber \\
& \quad \quad \quad + H\left(S(i)|X^{\ML},Y^{\NL}, T = t, N = n, \bar\E_i \right)
\label{eq:bound7}
}
The final step is to bound the conditional entropy term in (\ref{eq:bound7}), for the case where $i \in t$.
Set $\delta = \alpha/2$. 
Conditioned on $\bar\E_i$, $d_H\left( X_{S(i)}^\len, Y_{i}^\len \right) < \alpha \len/2$.
Moreover, conditioned on $T=t$, for any $j \in t - \{i\}$, $d_H\left( Y_i^\len, Y_j^\len \right) \geq \alpha \len$.
For $i \in t$, we define the set 
%
%
%
%
%
%
%
%
\aln{
A_i = \{ j \st Y^{\len}_i \text{ is the closest output string in $t$ to }X^{\len}_j \}.
}
Notice that $A_i$, $i\in t$, forms a partition of $[1:M]$.
We claim that, if $i \in t$, $S(i)$ must be in $A_i$.
To see this notice that, for any $k \in t$, $k \ne i$, we have
\aln{
\alpha \len & \leq d_H \left( Y_i^{\len}, Y_k^{\len} \right) \\
& \leq d_H \left( X_{S(i)}^{\len}, Y_i^{\len} \right) + d_H \left( X_{S(i)}^{\len}, Y_k^{\len} \right)  \\
& < \alpha \len/2 + d_H \left( X_{S(i)}^{\len}, Y_k^{\len} \right),
}
implying that $d_H \left( X_{S(i)}^{\len}, Y_k^{\len} \right) > \alpha \len/2 \geq d_H \left( X_{S(i)}^{\len}, Y_i^{\len} \right)$, and thus $S(i) \in A_i$.
%
Therefore, $S(i)$ for each output string $Y^{\len}_i$ with $i \in t$, can take at most $|A_i|$ values.
Hence we have
\al{
& \sum_{i=1}^n H\left(S(i)|X^{\ML},Y^{\NL}, T = t, N = n, \bar\E_i \right) \nonumber \\
& \quad \quad \quad \leq \sum_{i \not\in t} \log M + \sum_{i \in t} \log |A_i| \nonumber \\
& \quad \quad \quad = (n-|t|) \log M + \sum_{i \in t} \log |A_i| \nonumber \\
& \quad \quad \quad \leq (n-|t|) \log M +|t| \log (M/|t|) \nonumber \\
& \quad \quad \quad = n \log M - |t| \log |t|, \label{eq:boundAi}
}
where the last inequality follows because $\sum_{i \in t} |A_i| = M$, and the sum is maximized by $|A_i| = \M/|t|$.
Combining (\ref{eq:bound6}), (\ref{eq:bound7}), and (\ref{eq:boundAi}), we obtain
\al{ \rescnt
& H\left(S^N|X^{\ML},Y^{\NL}, N=n\right)  \nonumber \\
& \quad  \leq  \sum_{i=1}^n \sum_{t \subseteq [1:n]} \Pr\left( T = t | N=n \right)\left[ 1 + \Pr(\E_i | T = t, N=n) \log \M  \right] \nonumber \\
& \quad  \;\; +  \sum_{t \subseteq [1:n]} \Pr\left( T = t | N=n\right) \sum_{i=1}^n H\left(S(i)|X^{\ML},Y^{\NL}, T = t, N=n, \bar \E_i \right) \nonumber \\
& \quad  =  n + \log M\sum_{i=1}^n \Pr(\E_i | N=n) + n \log M - \Ex_n\left[|T| \log |T|\right] \nonumber \\
& \quad  \leqnum  n + M \log M \Pr(\E_i) + n \log M - \Ex_n\left[|T| \log |T|\right] \nonumber \\
& \quad  \leqnum  n + M^{-\beta D(\delta \| p)} M \log M + n \log M - \Ex_n |T| \log \Ex_n |T| \nonumber
}
%
%
%
where, in $(i)$ we used the fact that $\E_i$ is independent of $N=n$ and $n \leq M$, and in $(ii)$ we used Jensen's inequality.
Since $ \M^{-\beta D(\delta || p)} \to 0$ as $\M \to \infty$,
$\M^{-\beta D(\delta || p)} \M \log \M = o(\ML)$, concluding the proof.

\section{Proof of Proposition~\ref{thm:shortseqs} }

We use a similar genie-aided and counting-based proof as in Section~\ref{sec:converseCounting}. 
The only difference is on how the number of frequency vectors is bounded. 
As before, the frequency vector on the output of the genie-aided channel satisfies, for any $\delta>0$, 
$\norm[1]{\vf}  \leq \M (\Emolseen + \delta)$. 
We next upper bound the number of different frequency vectors
$\vf \in \mathbb Z_+^{\M^\beta}$ with $\norm[1]{\vf} =  \M (\Emolseen + \delta)$. 
By Lemma~\ref{lem:comb}, the number of different frequency vectors we see at the output is upper bounded by
\aln{
\T[\M^\beta, \M(\Emolseen + \delta)] 
& = {\M^\beta+\M(\Emolseen + \delta)-1 \choose \M(\Emolseen + \delta) }
= {\M^\beta+\M(\Emolseen + \delta)-1 \choose  \M^\beta-1 }
\\
& < 
\left( \frac{e( \M^\beta+\M(\Emolseen + \delta)) }{ \M^\beta} \right)^{\M^\beta},
}
where the second equality follows from ${n \choose k} = {n \choose n-k}$.
Taking the logarithm we get
\[
\log \T[\M^\beta, \M(\Emolseen + \delta)] 
\leq
\M^\beta ( (1-\beta) \log \M + \log(1-q_0+\delta) + 1).
\]
Dividing by $\M^\beta \len = \M^\beta \beta \log(\M)$ and letting $\M \to \infty$ gives
\[
\tilde R \leq (1-\beta)/\beta,
\]
as desired.

\end{document}